\def\diag{\operatorname{diag}}
\def\Sgn{\operatorname{Sgn}}
\def\ket#1{| #1 \rangle}
\def\bra#1{\langle #1 |}
\def\ip#1#2{\langle #1 | #2 \rangle}
\def\norm#1{\| #1 \|}
\def\up{\uparrow}
\def\down{\downarrow}
\def\S{\mathcal{S}}
\def\Da{\mathrm{Da}}
\def\prob{\mathrm{prob}}
\def\s{\mathsf{s}}
\newcommand{\ketbra}[3]{\langle #1 \vert #2 \vert #3 \rangle}
\newcommand{\inner}[2]{\langle #1 \vert #2 \rangle}
\renewcommand{\H}{\bar{H}}
\newcommand{\barcalH}{\bar{\mathcal{H}}}
\begin{document}

\title{Information Transfer Fidelity in Spin Networks and Ring-based Quantum Routers}

\author{E.~Jonckheere \and F. C. Langbein \and  S. G. Schirmer}
  \institute{E. Jonckheere \at Center for Quantum Information Science
            and Technology, University of Southern California,
            Los Angeles, CA, 90089, USA\\
            \email{jonckhee@usc.edu}
  \and
            F. C. Langbein \at School of Computer Science \& Informatics,
            Cardiff University, Cardiff CF24 3AA, UK\\
            \email{LangbeinFC@cardiff.ac.uk}
  \and
            S. G. Schirmer \at College of Science (Physics),
            Swansea University, Singleton Park,
            Swansea, SA2 8PP, UK\\
            \email{sgs29@swan.ac.uk}}
\date{}
\maketitle

\begin{abstract}
Spin networks are endowed with an Information Transfer Fidelity (ITF),
which defines an absolute upper bound on the probability of
transmission of an excitation from one spin to another.  The ITF is
easily computable but the bound can be reached asymptotically in time
only under certain conditions.  General conditions for attainability
of the bound are established and the process of achieving the maximum
transfer probability is given a dynamical model, the translation on
the torus.  The time to reach the maximum probability is estimated
using the simultaneous Diophantine approximation, implemented using
a variant of the Lenstra-Lenstra-Lov\'asz (LLL) algorithm.  For a
ring with uniform couplings, the network can be made a metric
space by defining a distance (satisfying the triangle inequality)
that quantifies the lack of transmission fidelity between two nodes.
It is shown that transfer fidelities and transfer times can be
improved significantly by means of simple controls taking the form
of non-dynamic, spatially localized bias fields, opening up the
possibility for intelligent design of spin networks and dynamic
routing of information encoded in them, while being more flexible than
engineering fixed couplings to favor some transfers, and less
demanding than control schemes requiring fast dynamic controls.

\keywords{Spin networks \and quantum ring routers \and Information
  Transfer Fidelity \and simultaneous Diophantine approximation \and
  Lenstra-Lenstra-Lov\'asz (LLL) lattice basis reduction \and
  information geometry.}
\end{abstract}

\section{Introduction}

Efficient and controllable transport of information is crucial for
information processing, both classical and quantum.  While bosonic
channels~\cite{Caruso2014} are the most attractive option for
long-distance communication, efficient on-chip interconnectivity in a
quantum processor based on atomic, ionic or quantum dot-based qubits,
or quantum spintronic devices~\cite{quantum_spintronics2013} will
require direct information transport through networks of coupled
solid-state qubits.  Such networks can be modeled via interacting
spins and are therefore generally referred to as spin networks.
Initiated Bose's~\cite{Bose2003} seminal work, spin networks have
received considerable attention in recent years (see review
articles~\cite{Bose2007,Key2010} and references therein).  Most of the
work has focused on information transmission through linear chains as
prototype quantum wires, starting with unmodulated
chains~\cite{Bose2003} and later perfect state transfer in chains with
fixed, engineered couplings~\cite{Christandl2004, Christandl2005}, and
finally controlled state transfer in spin chains, e.g., via adiabatic
passage ~\cite{Greentree2004}, ac modulation to achieve renomalization
of the couplings between adjacent qubits~\cite{Zueco2009}, single-node
bang-bang controls~\cite{Schirmer2009} or global dynamic
controls~\cite{Wang2010}.  Perfect state transfer in more general
networks has also been considered and some interesting results for
complete graphs were obtained in~\cite{Severini2009}.

Nonetheless, the information-theoretic properties of spin networks are
not fully understood.  Information encoded in excitations of a network
of coupled spins propagates, even under ideal conditions when quantum
coherence is maintained, in a non-classical way determined by the
Schr\"odinger equation.  Under best possible circumstances, this
propagation of excitations determines the Information Transfer
Fidelity (ITF) between various nodes of the network%
\footnote{Previously~\cite{MSC2011}, this concept was named
  Information Transfer \emph{Capacity}, but we refrain here from using
  this terminology to avoid confusion with the Shannon channel
  capacity~\cite{Caruso2014}.}.  Perfect state transfer between two
nodes can only be achieved when the Information Transfer Fidelity
between the respective nodes is unity.  However, this condition is not
sufficient. For example, while it is satisfied for the end nodes of a
chain with uniform couplings~\cite{MSC2011} such chains are usually
\emph{not} considered to admit perfect state transfer except
for chains of length two or three.

This raises the question of the attainability of the upper bound
given by the Information Transfer Fidelity.  Attainability in general
also depends on time constraints, i.e., attainable in what time, and the
margins of errors we are willing to accept.  In practice, some
margin of error is unavoidable, and the real question of interest is
therefore not whether we can achieve, e.g., \emph{perfect}, i.e.,
unit fidelity, state transfer in time $t_f$, but rather whether
we can achieve state transfer with a fidelity $1-\epsilon$, where
$\epsilon$ is an acceptable margin of error, in a reasonable amount
of time.  We may be willing to accept a slightly increased margin
of error for a significant reduction in the transfer time.
In this work, we are interested in such fundamental questions for spin networks
subject to coherent dynamics in general, and specifically simple
configurations such as a circular arrangement of spins (or spin ring
for short), which could serve as basic building blocks for more
complex architectures.

After introducing some basic definitions and basic results in Section
\ref{s:basic_def_results}, the concept of asymptotic ITF, i.e.,
maximum Information Transfer Fidelity attainable absent constraints on
the transfer times, between nodes in a network of spins is introduced
in Section \ref{s:attainability}.  Conditions for attainability of the
bounds are derived using dynamic flows on tori and the simultaneous
Diophantine approximation, computationally implemented using the
Lenstra-Lenstra-Lov\'asz (LLL) algorithm.  Under certain conditions,
the information transfer infidelity induces a metric that captures how
close two nodes in a spin network are from an information theoretic
point of view.  This information transfer geometry is investigated in
Section \ref{s:geometry}.  Finally, in Section~\ref{s:control}, we
investigate how the information transfer geometry of a network can be
changed by means of simple controls in the form of fixed biases
applied to individual nodes, and how this principle could be employed
for dynamic routing in a spin network with ring topology without the
requirement of fast-switching controls.

\section{Basic Definitions and Results}
\label{s:basic_def_results}

We consider networks of $N$ spins arranged in some regular pattern
with either XX or Heisenberg interaction~\cite{QINP2014} specified
by the Hamiltonian
\begin{align}
H = \sum_{i,j=1}^{N} J_{ij}
      \left(\sigma^x_i\sigma^x_{j} + \sigma^y_i\sigma^y_{j}
                             + \eta\sigma^z_i\sigma^z_{j} \right).
\end{align}
We specifically focus on networks with XX coupling ($\eta=0$) and
Heisenberg coupling ($\eta=1$), although most of the concepts and
analysis in the following are not limited to these types of coupling.
$J_{ij}$ is the strength of the coupling between spin $i$ and spin
$j$.  The factor $\sigma^{x,y,z}_i$ is the Pauli matrix along the
$x,y,$ or $z$ direction of spin $i$, i.e.,
\begin{equation*}
\sigma^{x,y,z}_i =
 I_{2\times 2} \otimes \ldots \otimes I_{2 \times 2} \otimes
 \sigma^{x,y,z}\otimes I_{2\times 2}\otimes \ldots \otimes I_{2 \times 2},
\end{equation*}
where the factor $\sigma^{x,y,z}$ occupies the $i$th position among the
$N$ factors and $\sigma^{x,y,z}$ is either of the single spin Pauli
operators
\begin{equation*}
\sigma^x= \begin{pmatrix} 0 & 1 \\ 1 & 0 \end{pmatrix}, \quad
\sigma^y= \begin{pmatrix} 0 & -\imath \\ \imath & 0  \end{pmatrix}, \quad
\sigma^z= \begin{pmatrix} 1 & 0  \\ 0 & -1 \end{pmatrix}.
\end{equation*}
The system Hilbert space $\mathcal{H}$ on which $H$ acts is conveniently taken as
$\mathbb{C}^{2^N}$.  We can abstract the network of spins as a graph
$\mathcal{G}=(\mathcal{V},\mathcal{E})$, where the vertices represent
the spins and the edges indicate the presence of couplings.

A particular configuration considered in this paper is that of \emph{spin
  rings}, i.e., spin networks defined by a circular arrangement of
spins, described by a $J$-coupling matrix that is \emph{circulant}
with \emph{nearest neighbor coupling}:
\begin{equation}
\label{e:Hamiltonian}
 H = \sum_{i=1}^{N-1} J_{i,i+1}
      \left(\sigma^x_i\sigma^x_{i+1} + \sigma^y_i\sigma^y_{i+1}
                             + \eta\sigma^z_i\sigma^z_{i+1} \right)
    +J_{N,1} \left(\sigma^x_N\sigma^x_{1}+\sigma^y_N\sigma^y_{1}
                                   +\eta\sigma^z_N\sigma^z_{1}\right).
\end{equation}
The term $J_{N,1}$ represents the coupling energy between the two
ends, spins $1$ and $N$, closing the ring. For networks with uniform
couplings, i.e., all non-zero couplings have equal strength $J$
(in units of Hz), we can set $J=1$ by choosing time in units of $J^{-1}$.

\subsection{Single Excitation Subspace}

Although many of the results in the following sections are more widely
applicable, we primarily concern ourselves here with the single
excitation subspace of the network~\cite{Key2010}, spanned by the $N$
single excitation quantum states $\{\ket{i}: i=1, \ldots, N\}$, where
$\ket{i}=\ket{\up\up\ldots \up\down\up\ldots\up}$ with $\down$ in the
$i$th position indicating that spin $i$ carries the excitation.  The
natural coupling among the spins allows the excitation at $i$ to drift
towards an excitation at $j$ with an \emph{Information Transfer
Fidelity (ITF)} that can be quantified by the maximum transition
probability $p_{\max}(i,j)$.  This concept will be precisely defined
in the next section, but in this introductory exposition we could
think of ``maximum'' as the process of giving the transition from
spin $i$ to $j$ the correct amount of time so that it is most
likely to occur.  The concepts behind these ideas are lying at the
foundation of quantum mechanics as embodied in the Feynman path
integral.  These concepts reveal that, contrary to classical
least-cost-path routing that follows a \emph{single} path from a
source to a destination in a classical network, quantum networks
follow \emph{all possible} paths from the state $\ket{i}$ to the state
$\ket{j}$.

\subsection{Eigendecomposition of the Hamiltonian}

Restricted to the single excitation subspace $\barcalH \cong
\mathbb{C}^N$, the eigen-decomposition of the Hamiltonian reads
$\bar{H}=\sum_k \lambda_k \Pi_k$, where $\lambda_k$ for
$k=1,\ldots,\widetilde{N}\le N$ are the distinct real eigenvalues and
$\Pi_k$ are the projectors onto the corresponding eigenspaces.

For a spin ring of size $N$ with uniform XX-coupling between adjacent
spins, $J_{ij}=J$ for $i=j\pm 1$, $(i,j)=(1,N)$, $(i,j)=(N,1)$, and $J_{ij}=0$
otherwise. In this case, the single excitation subspace Hamiltonian in the basis \linebreak
$\ket{i}=(0,0,\ldots,0,1,0,\ldots, 0)^T$ becomes the circulant matrix
\begin{equation}
\label{e:C_N}
  \H_N=\begin{pmatrix}
0 & 1 & 0 & \ldots &&&  0 & 1\\
1 & 0 & 1 &           &&&  0 & 0\\
0 & 1 & 0 & \ddots &&&  0 & 0\\
\vdots &  & \ddots  &&&  & &\vdots \\
  &   &   &        &&&  \ddots  & \\
0 & 0 & 0 &  &  &   \ddots  &  0  & 1\\
 1  & 0   &  0 & \ldots& & &   1  & 0
\end{pmatrix} (=: C_N),
\end{equation}
where the subscript $N$ is utilized to indicate that the system has
$N$ spins.  For uniform Heisenberg coupling the Hamiltonian is the
same except for the addition of a multiple of the identity, which
simply shifts the eigenvalues by a constant and does not affect the
eigenvector structure or differences between eigenvalues.  The
eigenvalues and eigenvectors of circulant matrices are well known and
shown in Table~\ref{t:eigenstructure}.  The $N$ single excitation
eigenvalues are conveniently parameterized by an integer $k$ running
from $0$ to $N-1$ or $1$ to $N$ with the cyclic condition that
$\lambda_{0}=\lambda_{N}$.

\begin{table}
\caption{Eigenvalues and eigenvectors of Hamiltonian $\H$ over single
  excitation subspace~\cite{Wang2012} in the basis where
  $\ket{i}=e_i:=\left(0 \ldots 0 ~ 1 ~ 0 \ldots 0 \right)^T$.
  $\rho_N=\exp(2\pi \imath /N)$ and $\ket{v_k}_j$ denotes the $j$th
  component of $\ket{v_k}$.}
\begin{center}
\begin{tabular}{|r|c|c|}\hline\hline
 & $\lambda_{k=0,...,N-1}$ & $\ket{v_k}_{j=0,...,N-1}$ \\ \hline\hline
&& \\
XX-coupling ($\eta=0$) & $2\cos\left( \frac{2\pi k}{N}\right)$ & $\sqrt{\frac{1}{N}} \rho_N^{k(j-1)}$ \\
&&\\ \hline
&&\\
Heisenberg coupling ($\eta=1$) &
$ 2\cos\left( \frac{2\pi k}{N}\right)+1$  & $ \sqrt{\frac{1}{N}} \rho_N^{k(j-1)} $ \\
&&\\\hline\hline
\end{tabular}
\end{center}
\label{t:eigenstructure}
\end{table}

The following lemma regarding the eigenvalues will be helpful later.
\begin{lemma} \label{l:interlacing}
For a spin ring of size $N$ with uniform XX-couplings we have:
\begin{itemize}
\item For $N$ even, but not divisible by 4, then the spectrum of
  $\H_N$ has mirror symmetry relative to the origin; precisely, we
  have $\lambda_k=\lambda_{N-k}= -\lambda_{N/2-k}=-\lambda_{N/2+k} \ne
  0$, i.e., there are $\tfrac{1}{2}N-1$ distinct pairs of double
  eigenvalues, and two single eigenvalues $\pm 2$, giving a total of
  $\widetilde{N}=(N+2)/2$ pairwise distinct eigenvalues:
      \[
      \{-2, \lambda_k, 2: k=1, \ldots, \tfrac{1}{2}N-1 \}.
      \]
      If $N$ is divisible by $4$ then the spectrum has a total of
      $\widetilde{N}=(N+2)/2$ pairwise distinct eigenvalues and a
      double eigenvalue at $0$ (for $k=\tfrac{1}{4}N, \tfrac{3}{4}N$).

\item For $N$ odd, we have $\lambda_{N-k} = \lambda_k \ne 0$ and there
  are $(N-1)/2$ distinct pairs of double eigenvalues and a single
  eigenvalue $+2$, giving a total of $\widetilde{N}:=(N+1)/2$ distinct
  eigenvalues:
      \[
          \{\lambda_k , +2: k=1,\ldots ,\tfrac{1}{2}(N-1)\}.
      \]
\item In either case, the number of pairwise distinct eigenvalues is
      \[
 \widetilde{N}:=\left\lceil \frac{N-1}{2}\right\rceil+1=\left\lceil\frac{N+1}{2}\right\rceil.
      \]
\end{itemize}
Moreover, the eigenvalues of $C_N$ and $C_{N-1}$ are interlaced.
\end{lemma}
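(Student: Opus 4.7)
The plan is to work directly from the closed-form eigenvalues $\lambda_k = 2\cos(2\pi k/N)$ recorded in Table~\ref{t:eigenstructure}, so that every claim in the lemma reduces to elementary identities and monotonicity properties of $\cos$ on $[0,\pi]$. The three assertions -- symmetry groupings, the exact count of distinct eigenvalues, and interlacing with $C_{N-1}$ -- are then handled in that order, with bookkeeping across parity cases being the only place that requires real care.

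For the symmetry and counting step, I would first invoke $\cos(-\theta) = \cos\theta$ together with $2\pi$-periodicity to obtain $\lambda_{N-k} = \lambda_k$ in all cases; this supplies every ``pair of double eigenvalues'' and forces the only possible \emph{single} eigenvalues to occur at the self-paired indices $k \equiv N-k \pmod N$, namely $k=0$ (always giving $\lambda_0 = 2$) and, when $N$ is even, $k = N/2$ (giving $-2$). For $N$ even I would further use $\cos(\pi \pm \theta) = -\cos\theta$ to obtain the mirror symmetry $\lambda_{N/2 \pm k} = -\lambda_k$ in the stated form. The subcase $4 \mid N$ is then the observation that $k = N/4$ and $k = 3N/4$ both evaluate to $\cos(\pi/2) = 0$, producing a double zero eigenvalue. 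The unified count $\widetilde{N} = \lceil (N+1)/2\rceil = \lfloor N/2 \rfloor + 1$ then follows by direct bookkeeping.

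For the interlacing, the key reduction is that $\cos$ is strictly decreasing on $[0,\pi]$, so after sorting the distinct spectra of $C_N$ and $C_{N-1}$ in decreasing order, interlacing of eigenvalues is equivalent to interlacing of the angles $\theta_k^{(N)} = 2\pi k/N$, $k = 0, \dots, \lfloor N/2\rfloor$, with $\theta_j^{(N-1)} = 2\pi j/(N-1)$, $j = 0, \dots, \lfloor (N-1)/2\rfloor$, inside $[0,\pi]$. This in turn reduces to the pair of arithmetic inequalities
\[
  \frac{k}{N} \le \frac{k}{N-1} \le \frac{k+1}{N},
\]
the first holding trivially for $k \ge 0$ and the second being equivalent to $k \le N-2$, which is well within the range $k \le \lfloor N/2\rfloor$ once $N \ge 3$. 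Applying $2\cos(\cdot)$ reverses the inequalities and yields the required (weak) interlacing, with equality occurring only at the top, where $\lambda_0(C_N) = \lambda_0(C_{N-1}) = 2$.

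The step I expect to require the most care is not any individual calculation but the uniform treatment of parity in the interlacing. Specifically, for $N \equiv 0 \pmod 4$ one must check that the new distinct eigenvalue $0$ (coming from $k = N/4$) fits cleanly between two consecutive distinct eigenvalues of $C_{N-1}$ -- which is again the same arithmetic inequality applied at $k = N/4$ -- and one must verify that the cardinalities $\lfloor N/2\rfloor + 1$ and $\lfloor (N-1)/2\rfloor + 1$ always differ by at most one so that the alternation statement is meaningful (the difference is $0$ for $N$ odd and $1$ for $N$ even). Once this bookkeeping is in place, the three subcases of the lemma collapse into the single interlacing statement claimed.
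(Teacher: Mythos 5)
Your treatment of the symmetry and counting assertions is correct and is, modulo writing out the cosine identities, exactly what the paper means when it dismisses those items as ``trivial.'' The more interesting part is the interlacing, where you genuinely diverge from the paper: the paper disposes of it with a one-line appeal to ``the Cauchy interlacing property'' together with a citation, whereas you give a direct elementary proof via the monotonicity of $\cos$ on $[0,\pi]$ and the arithmetic interlacing of the angle sets $\{2\pi k/N\}$ and $\{2\pi j/(N-1)\}$. Your route has a real advantage: the textbook Cauchy interlacing theorem relates a Hermitian matrix to a \emph{principal submatrix}, and deleting a row and column of $C_N$ produces the tridiagonal Toeplitz matrix $T_{N-1}$ (the one in Lemma~\ref{l:toeplitz}, with spectrum $2\cos(\pi k/N)$), \emph{not} the circulant $C_{N-1}$. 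Indeed, the eigenvalues of $C_N$ and $C_{N-1}$ counted with multiplicity do \emph{not} Cauchy-interlace (e.g.\ $C_5$ has sorted spectrum $2,\,2\cos\frac{2\pi}{5},\,2\cos\frac{2\pi}{5},\,2\cos\frac{4\pi}{5},\,2\cos\frac{4\pi}{5}$, while $C_4$ has $2,0,0,-2$, and the third Cauchy inequality $0\ge 2\cos\frac{2\pi}{5}$ fails). You correctly interpret the lemma as a statement about the \emph{distinct} spectra and prove exactly that; your observation that the cardinalities $\widetilde N$ and $\widetilde{N-1}$ differ by $0$ or $1$ depending on parity, and that the only coincidence is the shared top eigenvalue $2$ at $k=j=0$ (forced by $\gcd(N,N-1)=1$), is the right bookkeeping. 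One small slip: $\frac{k}{N-1}\le\frac{k+1}{N}$ simplifies to $k\le N-1$, not $k\le N-2$; this does not affect anything since you only need $k\le\lfloor N/2\rfloor$. Overall your proof is correct, more self-contained than the paper's, and I would say it fills a small gap that the paper's citation glosses over.
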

\begin{proof}
The listed items are trivial. The last claim is the Cauchy interlacing
property~\cite{Massey2007}.
\end{proof}

For a double eigenvalue $\lambda_k=\lambda_{N-k}$, denote the
projection on the corresponding eigenspace as
$\Pi_k:=\ket{v_k}\bra{v_k}+\ket{v_{N-k}}\bra{v_{N-k}}$, where the
eigenvectors can be chosen such that $v_{N-k}=v_k^*$.  Moreover, for
the single eigenvalue $\lambda_0=+2$, define
$\Pi_0:=\ket{v_0}\bra{v_0}$ to be its eigenprojection.  If $N$ is
even, the single eigenvalue $\lambda_{N/2}=-2$ has its eigenprojection
denoted as $\Pi_{N/2}:=\ket{v_{N/2}}\bra{v_{N/2}}$.  If, in addition,
$N$ is divisible by $4$, denote the eigenprojection of the double
eigenvalue $\lambda_{N/4}=\lambda_{3N/4}=0$ as
$\Pi_{N/4}:=\ket{v_{N/4}}\bra{v_{N/4}}+\ket{v_{3N/4}}\bra{v_{3N/4} }$.
With this notation, the Hamiltonian restricted to the single
excitation subspace can be written as
\begin{equation*}
  \bar{H}=\sum_{k=0}^{\tilde{N}-1} \lambda_k \Pi_k.
\end{equation*}
  The
above can easily be extended to the Heisenberg case by globally
shifting the eigenvalues by $1$.

\section{Maximum Transfer Fidelity and Attainability}
\label{s:attainability}

Let $\ket{i}\in \barcalH$ be a quantum state with excitation localized
at spin $i$.  The quantum mechanical probability of transition from
state $\ket{i}$ to state $\ket{j}$ in an amount of time $t$ is given by
\begin{equation*}
  p_t\left(i,j\right)=|\ketbra{i}{e^{-\imath \hbar H_1t}}{j}|^2,
\end{equation*}
where we choose energies in units of $\hbar/J$ allowing us to assume
$\hbar=1$ and omit $\hbar$ in the following.  This formula is a
corollary of the Feynman path integral~\cite{Kaku1998, MIT2000}.  To
circumvent the difficulty posed by the time-dependence of this
probability, we proceed as in~\cite{MSC2011} and define the
\emph{maximum transition probability} $p_{\max}(i,j)$ also referred to
as \emph{Information Transfer Fidelity (ITF)}:
\begin{equation}
\label{e:ITF}
\begin{split}
  p_t(i,j)  & = \left| \bra{i} e^{-\imath H_1 t} \ket{j} \right|^2
                 =  \left| \sum_{k=0}^{\widetilde{N}-1} \bra{i}\Pi_k\ket{j}
                     e^{-\imath\lambda_k t}\right| ^2\\
             &\leq \left(\sum_{k=0}^{\widetilde{N}-1}
                       \left|\ bra{i}\Pi_k\ket{j} \right| \right)^2 =: p_{\max}(i,j).
\end{split}
\end{equation}
Clearly, $p_{\max}(i,j) \leq 1.$ Observe that, instead of taking the
sum of the absolute values of all $\bra{i}\Pi_k\ket{j}$ terms, we
could take the sum of the absolute values of some partial sums of such
terms and derive other upper bounds.  Note that the upper bound is
valid for any spin network, no matter how many spins, no matter how
many multiple eigenvalues, no matter the topology. Since the upper
bound depends only on the eigenvectors of the Hamiltonian and since
those are continuously dependent on the strengths of the couplings,
the upper bound is continuous relative to the $J_{ij}$.

\subsection{Attainability of Bounds}

The ITF $p_{\max}(i,j)$ is an upper bound on $p_t(i,j)$, which acquires
its full significance if $p_{\max}$ is achievable, that is, if there
exists a sequence of time samples $\{t_{i,j}(n): n \in \mathbb{N}\}$
such that $\lim_{n\to\infty} p_{t_{ij}(n)}(i,j) =p_{\max}(i,j)$.
Observing that the absolute value in Eq.~\eqref{e:ITF} will absorb any
global phase factor, the attainability condition is that there exists
$t\in [0,\infty)$ such that
\begin{equation}
\label{eq1}
   e^{-\imath\lambda_k t} = s_k(i,j) e^{\imath\phi}, \quad \forall k = 0, \dots,\tilde{N}-1,
\end{equation}
where $s_k(i,j) := \Sgn(\bra{i}| \Pi_k\ket{j}) \in \{0,\pm 1\}$ is a sign
factor and $\phi$ is a global phase, which is arbitrary but must be
the same for all $k$'s.  Eigenspaces with $s_k=0$ (where the $(i,j)$
dependency is suppressed to avoid the clutter) have no overlap with
the initial and/or target state and do not contribute to the sum.  We
shall refer to them as \emph{dark state} subspaces. They can be
ignored and we can restrict ourselves to the set
$K'\subseteq\{0,1,\ldots,\widetilde{N}-1 \}$ of indices $k$ for which
$s_k\neq 0$.  The physical interpretation of $K'$ is the set of
eigenspaces $\Pi_k \barcalH$ that have non-trivial overlap with the
initial and target state.  Noting that $s_k=\pm 1$ for $k\in K'$, and
$\exp[-\imath \tfrac{\pi}{2}(s_k-1)]=1$ for $s_k=1$ and $\exp[-\imath
  \tfrac{\pi}{2}(s_k-1)]=-1$ for $s_k=-1$, we can write
\begin{equation}
  s_k = \exp\left[-\imath\pi \left(2n_k + \tfrac{1}{2}(s_k-1)\right) \right],
\quad \forall k\in K',
\end{equation}
where $n_k \in \mathbb{Z}$ is an arbitrary integer.  Inserting this
into (\ref{eq1}), taking the logarithm and dividing by $-\imath$
yields
\begin{equation}
\label{eq2}
   \lambda_k t = 2\pi n_k + \tfrac{\pi}{2}(s_k-1)
   -\phi, \quad \forall k \in K'.
\end{equation}
This condition is not directly useful as $\phi$ can be arbitrary, but
we obtain meaningful constraints if we subtract the equations in a
pairwise manner, with $k \ne \ell$:
\begin{equation}
\label{eq3}
   (\lambda_k-\lambda_\ell) t =
2\pi (n_k-n_\ell) + \tfrac{\pi}{2}(s_k-s_\ell),
  \quad \forall k,\ell \in K'.
\end{equation}
We can also write the attainability constraints more explicitly:
\begin{align*}
   (\lambda_k-\lambda_\ell) t &= 2 \pi (n_k-n_\ell), \quad  &\mathrm{if} \quad
   & s_\ell=s_k,\\
   (\lambda_k-\lambda_\ell) t &= 2 \pi(n_k-n_\ell) + \pi, \quad &\mathrm{if} \quad
   & s_k=-s_\ell=1, \\
   (\lambda_k-\lambda_\ell) t &= 2 \pi(n_k-n_\ell) - \pi, \quad &\mathrm{if} \quad
   & s_k=-s_\ell=-1.
\end{align*}
These conditions are necessary and sufficient for attainability.  They
are physical, only involving differences of the eigenvalues, which are
observable and independent of arbitrary phases.  Vanishing left-hand
sides in the above are not an issue, as we are only looking at the
differences, which are non-zero by definition as $\lambda_k$, $k\in
K'$, are the distinct eigenvalues of $\bar{H}$.

Observe that all of the equations are compatible.  Indeed, adding
Eq.~\eqref{eq3} for $(k,\ell)$ and $(\ell,m)$ yields~\eqref{eq3} for
$(k,m)$.  Naturally, these equations are redundant, but we obtain a
set of linearly independent equations if we exclude the
dark state subspaces and restrict ourselves to a suitable subset of equations,
e.g., $(k_{i-1},k_i)$ or $(k_0,k_i)$ for $K' = \{k'_1,k'_2,\dotsc,K'_{\bar{N}}\}$.

\begin{myexample}[Dark States for Rings.]
For ring systems with uniform XX coupling, the distinct eigenvalues
are $\lambda_k=2\cos(2\pi k /N)$.  For eigenvalues of multiplicity
$1$, which occur for $k=0$, and $k=\tfrac{1}{2}N$ if $N$ is even,
$\bra{i}\Pi_0\ket{j}=(1/N) \ne 0$ and
$\bra{i}\Pi_{N/2}\ket{j}=(1/N)(-1)^{i-j} \ne 0$; therefore, there are
no dark states associated with these eigenvalues.  For eigenvalues
with multiplicity $2$, $\bra{i}\Pi_k\ket{j}
=\tfrac{2}{N}\cos(\tfrac{\pi}{2}n)$ with $n=4k(i-j)/N$ for
$k=0,\ldots,\lceil (N-4)/2 \rceil$; therefore, there are dark states if
and only if $n$ is an odd integer. This can happen only if $N$ is
divisible by $4$. The same holds of rings with uniform Heisenberg
coupling as they have the same eigenspace structure and the
differences between eigenvalues are the same.
\end{myexample}

\subsection{Simultaneous Attainability and Flows on the Torus}

Excluding dark state subspaces, restricting~\eqref{eq3} to a subset
$\S\subseteq K'\times K'$ of linearly independent equations, and setting
$\omega_{k,\ell}=(\lambda_k-\lambda_{\ell})/\pi$, the attainability
conditions become
\begin{equation}
\label{eq3prime}
  t\omega_{k, \ell}  = \tfrac{1}{2}(s_k-s_\ell) \mod 2, \quad
\omega_{k,\ell}:=(\lambda_k-\lambda_{\ell})/\pi, \quad
\forall (k,\ell) \in \S.
\end{equation}
The left-hand side of the above is the solution of the \emph{flow on
  the torus} $\dot{x}=\omega_{k\ell}$, with $x(0)=0$.  In this dynamic
formulation, the question is whether the flow starting at $x(0)=0$ passes
through the point with coordinates $0$ or $1$, depending on whether
$s_k=s_\ell$ or $s_k \ne s_\ell$, respectively.  It is
well-known~\cite[Prop. 1.5.1]{KatokHasselblatt1997} that the flow
starting at an \emph{arbitrary} $x(0)$ (which includes $x(0)=0$)
passes arbitrarily close to an arbitrary point on the torus if and
only if the $\omega_{k,\ell}$'s are linearly independent over the
rationals $\mathbb{Q}$.  This property of the flow getting arbitrarily
close to an arbitrary point from an arbitrary initial condition is
very strong and referred to as \emph{minimality}.  Observe that for
the flow to be minimal it suffices that, starting at $x(0)=0$, it gets
arbitrarily close to any point.  Obviously, minimality is sufficient
but not necessary for attainability, as the latter only requires the
flow to pass arbitrarily close to a specific point on the torus, while
minimality guarantees that the flow can get arbitrarily close to any
point.

Recall that Eq.~\eqref{eq3} refers to a specific but arbitrary
transfer $\ket{i}\to \ket{j}$, as the signs depend on $i,j$. We could
consider all Eq.~\eqref{eq3}'s for all $i \ne j$ and ask the question
as to whether there exists a unique $t$ such that attainability holds
for all $i \ne j$.  We refer to this stronger version of attainability
as \emph{simultaneous attainability}.

If for a given pair $(i,j)$ there are at least three non-dark
eigenspaces corresponding to $s_d,s_m,s_n \in \{\pm 1\}$, then there
must exist a pair, say $(m,n)$, with $s_m-s_n=0$.  In this case,
setting $t = 2\tau/\omega_{mn}$ for $\tau \in \mathbb{N}$ ensures that
the $(m,n)$ Eq~\eqref{eq3prime} holds \emph{exactly} and the remaining
attainability equations become
\begin{equation}
  \label{e:theta}
    \theta_{k\ell} \tau= \tfrac{1}{2}(s_k-s_\ell) \mod 2, \qquad
    \theta_{k\ell} := 2\omega_{k\ell}/\omega_{m,n}, \quad
    \forall\,(k,\ell)\in \S_0 := \S\setminus \{(m,n)\}.
\end{equation}
The left-hand side $\theta_{k\ell} \tau$ of the preceding equation is
the solution of the \emph{translation on the torus}, that is,
$x(\tau+1)=x(\tau)+\theta_{k\ell} \mod2$ with initial condition
$x(0)=0$.  By~\cite[Prop. 1.4.1]{KatokHasselblatt1997}, the
translation on the torus can come arbitrarily close to any point iff
the elements in the set $\{1\} \cup \{\theta_{k\ell} :
(k,\ell)\in\S_0\}$ are linearly independent over $\mathbb{Q}$.  As
before, the linear independence is sufficient, but not necessary for
attainability.  Note that we can in principle always reorder the
eigenvalues so that the reference transition is $(m,n)=(1,2)$.

It should be noted that the attainability criteria above apply to any
spin network.  For specific types of networks, we can derive more
explicit criteria.

\begin{myexample}[Attainability Condition for Rings.]
Given the formula for the eigenvalues for homogeneous rings,
$\lambda_k = 2\cos(2\pi k/N)$, elementary trigonometry shows that
\begin{equation}
  \omega_{k\ell} =  \tfrac{1}{\pi}(\lambda_k-\lambda_\ell)
  = -\tfrac{4}{\pi} \sin( \tfrac{\pi}{N} (k+\ell)) \sin(\tfrac{\pi}{N} (k-\ell)).
\end{equation}
There are $\widetilde{N}=\lfloor N/2\rfloor+1$ eigenspaces and
$\widetilde{N}-1$ independent transition frequencies $\omega_{k\ell}$.
Choosing the subset of linearly independent equations $\S = \{(k,k+1):
k=0,\ldots, \bar{N}\}$, $\bar{N}:=\widetilde{N}-2$, with the ordering
of the eigenspaces as defined above, the attainability conditions can
be written as
\begin{equation*}
   \tfrac{4}{\pi} \sin (\tfrac{\pi}{N}(2k+1))\sin (\tfrac{\pi}{N}) =
   \tfrac{1}{2}(s_k-s_{k+1}) \mod 2,  \quad \forall\, k=0,\ldots,\bar{N}.
\end{equation*}
If $s_m=s_{m+1}$, then setting $t=2\tau/\omega_{m,m+1}$ for $\tau \in
\mathbb{N}$ ensures $\omega_{m,m+1} t = 2\tau = 0 \mod 2$ and the
attainability conditions become
  \begin{equation*}
    \theta_k \tau = \tfrac{1}{2}(s_k-s_{k+1}) \mod 2, \quad \forall
    k=0,\ldots,\bar{N},
  \end{equation*}
with $\theta_k= \sin
(\tfrac{\pi}{N}(2k+1))/\sin(\tfrac{\pi}{N}(2m+1))$.  Notice that the
signs of the projections of the initial state $\ket{i}$ and target
state $\ket{j}$, $s_k = \bra{j}\Pi_k\ket{i}$, depend on the choices of
the latter, and it may happen that the signs $s_k$ are alternating,
$s_{k+1}=-s_k$ for all $k$.  In this case, the problem can easily be
rectified by reordering the eigenvalues, e.g., so that that $s_0' =
s_1'$ with the new ordering.
\end{myexample}

\begin{myexample}[Rational Independence.]
Applying the previous results to a ring of $N=5$ spins, the number of
pairwise distinct eigenvalues of the single excitation Hamiltonian
$\bar{H}$ is $\widetilde{N}=3$ and there are two linearly independent
transition frequencies
$\omega_{01}=-\tfrac{4}{\pi}\sin(\tfrac{1}{5}\pi)
\sin(\tfrac{1}{5}\pi)$ and $\omega_{12}=-\tfrac{4}{\pi}
\sin(\tfrac{3}{5}\pi) \sin(\pi \tfrac{1}{5})$.  To verify linear
independence of $\left\{\sin(\tfrac{1}{5}\pi),
\sin(\tfrac{3}{5}\pi)\right\}$ over $\mathbb{Q}$, we must show that
the equation
\begin{equation*}
     \alpha_1 \sin\left(\tfrac{ \pi}{5}\right)
  +\alpha_3 \sin\left(\tfrac{3\pi}{5}\right)  = 0 \text{ for }
  \alpha_1,\alpha_3 \in \mathbb{Q},
\end{equation*}
has only the trivial solution $\alpha_1=\alpha_3=0$ over
$\mathbb{Q}$.  Using $\sin\left(\frac{ \pi}{5}\right) = \frac{1}{4}
\sqrt{10-2\sqrt{5}}$ and $\sin \left(\frac{3\pi}{5}\right) =
\frac{1}{4} \sqrt{10+2\sqrt{5}}$ we can rewrite the equation as
$\alpha_1^2(10-2\sqrt{5}) = \alpha_3^2(10+2\sqrt{5})$.  Viewing the
field $\mathbb{Q}(\sqrt{5})$ as a two-dimensional vector space over
$\mathbb{Q}$ with basis $1,\sqrt{5}$ gives two equations
$\alpha_1^2=\alpha_3^2$ and $\alpha_1^2=-\alpha_3^2$, which much be
simultaneously satisfied.  This is possible only for $\alpha_1=\alpha_3=0$.  Thus
the flow on the torus is minimal and $p_{\max}(i,j)$ is attainable for
all $(i,j)$.
\label{ex:ring5}
\end{myexample}

\begin{myexample}[Rational Dependence for Even Rings.]
  For a ring with $N=10$ spins there are $\widetilde{N}=6$ distinct
  eigenvalues and five primary transition frequencies $\omega_{k,k+1}$
  for $k=0,\ldots,4$.  Noting that $\sin(\tfrac{5\pi}{10}) = 1$,
  $\sin\left(\tfrac{\pi}{10}\right) =
  \sin\left(\tfrac{9\pi}{10}\right) = \tfrac{1}{4}(-1+\sqrt{5})$ and
  $\sin\left(\tfrac{3\pi}{10}\right) =
  \sin\left(\tfrac{7\pi}{10}\right) = \tfrac{1}{4}( 1+\sqrt{5})$.  It
  is easily seen that $\alpha=(2,-2,1)$ is a $\mathbb{Q}$-solution to
  the linear dependence equation
  \begin{equation*}
    \alpha_1 \sin\left(\tfrac{9\pi}{10}\right)+
    \alpha_2 \sin\left(\tfrac{3\pi}{10}\right)+
    4 \alpha_3 \sin(\tfrac{5\pi}{10}) = 0.
   \end{equation*}
  Hence, the $p_{\max}(i,j)$ are not simultaneously attainable ---
  although $p_{\max}(i,j)$ may be attainable for some $(i,j)$.
\end{myexample}

More generally, for a ring with $N$ even there are $\tfrac{1}{2}N$ transition
frequencies
\begin{align*}
  \omega_{k,k+1} = \tfrac{4}{\pi} \sin \left((2k+1)\tfrac{\pi}{N}\right)
  \sin\left(\tfrac{\pi}{N}\right),
\end{align*}
which occur in pairs $\omega_{k,k+1}=\omega_{\bar{N}-k,\bar{N}-k+1}$
with $\bar{N}=\tfrac{1}{2}N-1$, precluding rational independence.

\begin{myexample}[Rational Dependence for Odd Rings.]
  Similarly, we can easily verify that for a ring with $N=9$ spins the
  transition frequencies are \emph{not} rationally independent, as we
  have, e.g. $\sin(7\pi/9)-\sin(5\pi/9) + \sin(\pi/9) = 0$ and thus
  $\omega_{3,4}-\omega_{2,3}+\omega_{0,1}=0$.
\end{myexample}

In general, rational independence of the transition frequencies for
homogeneous rings does \emph{not} hold when $N$ is not prime.

\subsection{Simultaneous Diophantine Approximation}
\label{s:simult_dio_approx}

Instead of checking rational independence of
$\{1\}\cup\{\theta_{k\ell}:(k,\ell)\in \S_0\}$, a less conservative
approach is to proceed, either analytically or
computationally~\cite{Lagarias1982,Kovacs2013}, via the
\emph{simultaneous Diophantine approximation}~\cite{Nowak1984,
  Hensley2005, Chevallier2011, Bosma2012} by finding
integers $p_{k\ell}, q$ such that
\begin{align*}
\left| \theta_{k\ell}-\frac{p_{k\ell}}{q} \right| \leq
\frac{c}{q^{1+\epsilon}}, \qquad  \forall (k,\ell) \in \S_0,
\end{align*}
and $\epsilon >0$.  With $\tau=q$, the above yields
\begin{equation*}
  \left| \theta_{k\ell}\tau-p_{k\ell} \right|
   \leq \frac{c}{\tau^\epsilon}, \qquad \forall (k,\ell) \in \S_0.
\end{equation*}

In the single-dimensional case, the solution is well-known to be given
by the continued fraction expansion of $\theta$.  Truncating the
continued fraction expansion yields convergents, i.e., rational
fractions $p/q$, with errors bounded as $|\theta q-p|\leq 1/q$, which
is optimal among all rational approximations of denominators less than
or equal to $q$.  The major hurdle at extending this result to the
multi-dimensional case is that there is an incompatibility between the
unimodular property of the Multi-dimensional Continued Fraction (MCF)
solution and optimality.

Nevertheless, the celebrated Dirichlet box principle shows that there
are multi-dimensional approximations with $c=1$ and $\epsilon =
1/\bar{N}$, where in the present context $\bar{N}=|\S_0|$.  Moreover,
there are \emph{infinitely many} integer solutions $q$ to the
simultaneous Diophantine approximation; in other words, as $\tau$ is
allowed to become arbitrarily large, the above error can be made
arbitrarily small.  The constant $c= 1$ can hardly be improved as for
$c<1$ there are \emph{``badly approximable vectors"} $\theta \in
\mathbb{R}^{\bar{N}}$ defined by
$\liminf_{q\to\infty}q^{1/\bar{N}}d(\theta q,\mathbb{Z}^{\bar{N}})>0$
such that the simultaneous Diophantine approximation has only
\emph{finitely} many solutions~\cite[Sec.\ 5]{Lagarias1982b,
  Hensley2005}.  If, however, $c$ is allowed to depend on $\bar{N}$,
refined bounds ($c<1$) can be derived on $c(\bar{N})$ due to the
existence of infinitely many solutions~\cite{Nowak1984}.  Specializing
the approximation to $\bar{N}=2$, it can be shown~\cite{Nowak1981}
that the bound can be improved down to $c=8/13$, along with
$\epsilon=1/2$. In the 1-dimensional case Hurwitz's theorem says that
one can take $c=1/\sqrt{5}$ and $\epsilon=1$.  On a general tone, the
Dirichlet approximation can only be improved slightly and at the
expense of considerable extra difficulties; we will therefore work
exclusively with the Dirichlet approximation in the following.

Assuming we have obtained a Dirichlet-good simultaneous Diophantine
approximation, the approximate attainability conditions become
\begin{equation}
  \label{e:specp}
  p_{k\ell} = \tfrac{1}{2} (s_k -s_\ell) \mod 2, \quad \forall (k,\ell) \in \S_0.
\end{equation}
The difficulty is to find, if it exists, a simultaneous Diophantine
approximation of Dirichlet accuracy that satisfies the above
conditions on the numerators.  The following example demonstrates that
it is not, in general, possible to achieve the even/odd
conditions~\eqref{e:specp} on the numerators $p_{k\ell}$ without
compromising on the accuracy of the Diophantine approximation.  To be
more specific, arbitrary accuracy can still be achieved with
Conditions~\eqref{e:specp}, but a larger denominator is required to
achieve the same level of accuracy in the presence of the constraints.

\begin{myexample}[Simultaneous Diophantine Approximation with
    Constraints.]
\label{ex:diophantine}
  In Example~\ref{ex:ring5} the flow on the torus
 for a ring with $N=5$ was found to be minimal, implying that we can
 get arbitrarily close to an arbitrary point on the torus.  By the
 preceding argument, this guarantees existence of simultaneous
 Diophantine approximations of arbitrary accuracies with prescribed
 even/odd numerators.  Furthermore, it is readily found that
\begin{equation*}
  \theta_{12} = \frac{2 \sin(3 \pi/5)}{\sin (\pi/5)}=1+\sqrt{5}
              = [3;4,4,4,4,4,...],
\end{equation*}
where the final expression denotes the continued fraction expansion
giving the \emph{optimal rational
  approximations}~\cite[Chap. 10]{Hua1982}.  It is known that
quadratic irrationality leads to continued fractions that eventually
stabilize.  The first convergents are
\begin{equation*}
 3, \frac{13}{4}, \frac{55}{17}, \frac{233}{72}, \frac{987}{305},
 \frac{4181}{1292}, \frac{17711}{5473}, \frac{75025}{23184},
 \frac{317811}{98209}, \frac{1346269}{416020},\ldots
\end{equation*}
Observe that \emph{all of them} have \emph{odd} numerators, while
the approximations we require must have \emph{even} numerators since
for the $N=5$ ring $s_0=s_1=1$.  This can be rectified by using the
so-called semi-convergents~\cite[Sec.\ V.4]{Koblitz1988},
\cite{Kovacs2013}. Given two convergents ordered as
$p_{n-1}/q_{n-1} < p_n/q_n$
one can easily squeeze a semi-convergent between them as follows:
\begin{equation*}
 \frac{p_{n-1}}{q_{n-1}}< \frac{p_{n-1}+p_n}{q_{n-1}+q_n}<\frac{p_n}{q_n}.
\end{equation*}
The semiconvergent has even numerator and has the accuracy of the
convergents $p_{n-1}/q_{n-1}$ and $p_n/q_n$ but at the cost of
doubling the denominator.
To prove that the semiconvergents provide approximations of arbitrary
accuracy, it suffices to show that there are infinitely many $n$'s
such that $p_{n-1}/q_{n-1} < p_n/q_n$. This is a corollary of the
unimodular property of continuous fractions, saying that
$p_{n-1}q_n-p_nq_{n-1}$ is alternately $\pm 1$.
\end{myexample}

We propose a general iterative method to deal with the even/odd
constraints.  To simplify the notation, let $\theta \in
\mathbb{R}^{\bar{N}}$, $p\in \mathbb{Z}^{\bar{N}}$, be
column-vectorizations of the $\theta_{k\ell}$s, $p_{k\ell}$s,
respectively, where $\bar{N}:=|\S_0|$.  We want to come up with a
Dirichlet-good approximation, $\theta \approx p/q$, where $p\in
\mathbb{Z}^{\bar{N}}$, $q \in \mathbb{N}$, with even/odd constraints
on the numerators $p_i$. By ``Dirichlet-good," we mean that the
infinity-error is bounded as $\| \theta q-p\|_\infty \leq
c/q^{1/\bar{N}}$, where $c$ is a constant independent of $\bar{N}$ and
$q$.  The idea is to iteratively scale $\theta$ by (the inverse of) a
diagonal matrix of positive rational numbers,
$\bar{\theta}=Y(n)^{-1}\theta$, compute a Dirichlet-good approximation
of $\bar{\theta}$ using, e.g., the Dirichlet box principle, or the
LLL-algorithm, or Lagarias' Multidimensional Continued Fractions
(MCFs), and then revise the scaling to meet the even/odd constraints,
with the hope that the procedure will converge.  Write the
Dirichlet-good approximation $\bar{\theta} \approx \bar{p}/q$ and
manipulate it as follows:
\[\begin{array}{ccccc}
  & & \|\bar{\theta} q - \bar{p} \|_\infty & \leq & \frac{1}{q^{1/\bar{N}}},\\
\min_i (Y(n)_{ii}^{-1})\|\theta q-Y(n)\bar{p}\|_\infty & \leq &
       \|Y(n)^{-1}(\theta q - Y(n)\bar{p}) \|_\infty & \leq & \frac{1}{q^{1/\bar{N}}}.
\end{array}\]
It follows that
\[ \|\theta q - Y(n)\bar{p}(Y(n))\|_\infty \leq \frac{1}{q^{1/\bar{N}}}\max_i Y(n)_{ii}. \]
%
In other words, $Y(n)\bar{p}/q$ is a Dirichlet-good approximation of
$\theta$ provided $\max_i Y(n)_{ii}$ can be dominated by a bound
independent of $q$ and $\bar{N}$.  Because the initial choice of
$Y(n)$ is arbitrary it is not guaranteed that $Y(n)\bar{p}$ has the
correct even/odd property. Nevertheless, we could revise $Y(n)$ to
meet those properties.  If a component $\bar{p}_i$ comes out to be odd
and needs to be even, we choose $Y(n+1)_{ii}=2$. If the algorithm has
converged, that is $Y(n+1)=Y(n)$, then the bound becomes
\begin{equation}
\label{e:full_generality}
\|\theta q - Y(n+1)\bar{p}(Y(n))\|_\infty \leq 2 \frac{1}{q^{1/\bar{N}}}.
\end{equation}
Conversely, if $\bar{p}_i$ comes out to be even with $2^d$ in its
prime number decomposition, we take $Y(n+1)_{ii}=1/2^d$ and, at
convergence, the bound on the $i$th component becomes
\[  |\theta_i q - 2^{-d}\bar{p}_i|\leq 2^{-d} \frac{1}{q^{1/\bar{N}}}. \]
Then this procedure is repeated with the scaling $\bar{\theta}=Y(n+1)
\theta$, in the hope that it converges.

\begin{theorem}
\label{t:even_odd_approximation}
Given $\theta \in \mathbb{R}^{\bar{N}}$, assuming $\{Y(n)\}$
converges, there exists a simultaneous Diophantine approximation
$\theta \approx p/q$ satisfying prescribed even/odd constraints on the
numerators $p_i$, $i=1,\ldots,\bar{N}$, with an error bound $\|\theta q -
p\|_\infty \leq 2/q^{1/\bar{N}}$ that is off the usual Dirichlet bound
by a factor not exceeding $2$.
\end{theorem}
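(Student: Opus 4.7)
The plan is to read off the conclusion at the fixed point of the iteration described just before the statement. By the convergence hypothesis, there exists $n_0$ with $Y(n) = Y^*$ for all $n \geq n_0$, so the Dirichlet-good approximation $\bar{p} := \bar{p}(Y^*)$ of $\bar{\theta} := (Y^*)^{-1}\theta$ with denominator $q$ is left untouched by the update rule. This immediately gives the parity conditions: every component of $p := Y^*\bar{p}$ must already satisfy the prescribed even/odd constraint, because otherwise a further update would produce $Y(n_0+1) \neq Y(n_0)$, contradicting convergence.

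Next I would bound the diagonal entries of $Y^*$. The update rule only ever sets $Y_{ii}$ to $2$ (when $\bar{p}_i$ is odd and an even numerator is required), to $2^{-d_i}$ where $2^{d_i}$ is the exact power of $2$ dividing $\bar{p}_i$ (when a flip from even to odd is required), or leaves it unchanged. Hence $Y^*_{ii} \in \{1, 2\} \cup \{2^{-d} : d \geq 1\}$, so $\max_i Y^*_{ii} \leq 2$. Plugging this into the inequality chain already assembled in the excerpt yields
\begin{equation*}
\|\theta q - p\|_\infty = \|Y^*(\bar{\theta}q - \bar{p})\|_\infty \leq \bigl(\max_i Y^*_{ii}\bigr) \cdot \|\bar{\theta}q - \bar{p}\|_\infty \leq \frac{2}{q^{1/\bar{N}}},
\end{equation*}
where the inner bound is the Dirichlet estimate for $\bar{\theta}$ and the outer inequality uses the operator-$\infty$ norm of the positive diagonal matrix $Y^*$. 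This is exactly the claimed bound on $(p, q)$.

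The main obstacle is not in this bookkeeping but in the convergence hypothesis itself. The Dirichlet-good approximant $\bar{p}(Y(n))$ depends discontinuously on the rescaled vector $(Y(n))^{-1}\theta$ through a lattice reduction step such as LLL, so the parities of its coordinates can jump erratically from one iteration to the next, and there is no a priori reason for the sequence of scaling matrices to stabilize. Granting the hypothesis, however, reduces the theorem to the two-line scaling argument above; and the factor $2$ is tight, realized as soon as one coordinate requires an odd-to-even flip.
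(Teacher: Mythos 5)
Your proposal is correct and follows essentially the same route as the paper: extract the fixed-point scaling matrix from the convergence hypothesis, observe that the update rule forces $\max_i Y^*_{ii}\leq 2$ and simultaneously guarantees the parity conditions, and then push the Dirichlet bound for $\bar\theta$ through the diagonal scaling to obtain $\|\theta q - p\|_\infty \leq 2/q^{1/\bar N}$. Your write-up is a bit more explicit than the paper's in spelling out why stability of $\{Y(n)\}$ forces the numerators to have the required parity and in isolating the convergence hypothesis as the genuine gap, but the underlying argument is the same.
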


\begin{myexample}[Simultaneous Diophantine Approximation with Constraints.]
We consider the same situation as in Example~\ref{ex:diophantine},
where all convergents of $ \theta_{12}$ have odd numerators while
attainability calls for an even numerator. We initiate the algorithm
with $Y(0)=1$, that is, $\bar{\theta}=\theta=1+\sqrt{5}$. Whatever
convergent $\bar{p}/q$ we pick, it has odd numerator, hence we take
$Y(1)=2$. We hence rewrite the continued fraction decomposition with
$\bar{\theta}=(1/2)\theta=(1+\sqrt{5})/2$, which gives the convergents
\[
    1, 2, \frac{3}{2},  \frac{5}{3},  \frac{8}{5},  \frac{13}{8},
    \frac{21}{13},  \frac{34}{21},  \frac{55}{34},  \frac{89}{55},
    \frac{144}{89},  \frac{233}{144}, \frac{377}{233},
    \frac{610}{377}, \frac{987}{610}, \ldots
\]
%
To secure convergence, $Y(3)=Y(2)=2$, we need to pick a convergent
with odd numerator, say, $377/233$, and the new Diophantine
approximation of $\theta_{12}$ is $2\times 377/233$.  This give an
error $|\theta_{12}\times 233-754|=0.0038<2/233=0.0086$, as claimed.
\end{myexample}

\subsection{(Weighted) LLL-Algorithm}

Even though Theorem~\ref{t:even_odd_approximation} guarantees that,
under convergence conditions, Dirichlet-good simultaneous Diophantine
approximations can be manipulated so as to yield numerators that have
prescribed even/odd properties, we are still left with the problem of
coming up with simultaneous Diophantine approximations in the first
place.

One of the first computational solutions to the simultaneous
Diophantine approximation was the so-called LLL-algorithm by Lenstra,
Lenstra and Lov\'asz~\cite{Lenstra1982, Chevallier2011, Kovacs2013}.
An alternative algorithm based on geodesic multi-dimensional continued
fraction expansion was proposed by Lagarias~\cite{Lagarias1994}.  Both
approaches proceed by reduction of the basis of the lattice generated
by the columns of
\begin{equation*}
  B(\mathsf{s})= \begin{pmatrix}
  I_{\bar{N}\times\bar{N}} & -\theta \\ 0_{1 \times \bar{N}} & \mathsf{s}
 \end{pmatrix},
\end{equation*}
where $\mathsf{s}\downarrow 0$ is a scaling parameter.  Observing that
$B(\s)(p,q)^T=(p-\theta q,\s q)^T$, it follows that a short vector in
the lattice $B(\s)\mathbb{Z}^{\bar{N}+1}$ yields a good approximation.
The numerator of this good approximation could be ``fixed" by the
procedure of Section~\ref{s:simult_dio_approx} to satisfy the even/odd
requirement.  However, it is proposed to combine the two procedures
into a single one---computation of a good approximation from a short
lattice vector and fixing the numerator---

by introducing a \emph{nonuniform} diagonal scaling
and work on the lattice $\Lambda(\s,X)$ generated by the columns of
\begin{equation*}
  B(\s,X)= \begin{pmatrix}
  X & -X\theta \\ 0_{1 \times \bar{N}} & \mathsf{s}
 \end{pmatrix},
\end{equation*}
where $X=\diag \left( x_1,\ldots, x_{\bar{N}}\right)$.  Note that for
$\s=1$ and $X=xI_{\bar{N}\times\bar{N}}$, we recover the scaling
of~\cite{Kovacs2013}.  Like the algorithm of
Section~\ref{s:simult_dio_approx}, this procedure is not guaranteed to
be successful, but if it is, it yields solutions guaranteed to be
optimal relative to some criterion.  The LLL-algorithm produces a
basis of short Euclidean norm vectors \linebreak $\left( b^*(\s,X)_1,
b^*(\s,X)_2,\ldots, b^*(\s,X)_{\bar{N}+1} \right)=:B^*(\s,X)$ such
that
\begin{equation*}
  \| b^*(\s,X)_1\| < \| b^*(\s,X)_j \|, \quad j=2,\ldots \bar{N}+1.
\end{equation*}
The $b^*(\s,X)_1$ vector is very close to the
shortest one.  A refined version of the LLL-algorithm captures the
genuinely shortest vector of the lattice $\Lambda(\s,X)$ as follows:
Given the reduced basis $\{b^*(\s,X)_i:i=1,\ldots,\bar{N}+1\}$, it
can be shown that the \emph{shortest} (in the sense of the Euclidean
norm) lattice vector is to be sought among all lattice vectors of the
form $\sum_i \beta_i b^*(\s,X)_i$, $\left|\beta_i\right| \leq
\left(2/\sqrt{3}\right)^{\bar{N}+1}$.  Lagarias' theorem~\cite[Lemma
  5]{Chevallier2011} then implies that a \emph{shortest} Euclidean
norm vector of the lattice is a \emph{best} $X$-weighted Diophantine
approximation.
Observing that
\begin{equation*}
B(\s,X) \begin{pmatrix} p                  \\ q   \end{pmatrix}
      = \begin{pmatrix} X(p-\theta q) \\ \s q \end{pmatrix},
\end{equation*}
and taking $\s \downarrow 0$, it becomes clear that a \emph{short}
vector in the lattice $B(\s,X)\mathbb{Z}^{\bar{N}+1}$ provides a
\emph{good} $X$-weighted Diophantine approximation:
\begin{equation}
\label{e:LLLapprox}
q=\frac{\left( B^*(\s,X)\right)_{\bar{N}+1,1}}{\s}, \quad
p_i=\frac{\left( B^*(\s,X)\right)_{i,1}}{x_i} + \theta_i q.
\end{equation}
With the \emph{shortest} vector, we construct the \emph{best}
approximation, that is, the approximation that minimizes
\[ \|X(\theta q - p )  \|_2, \]
 in the
same way as for the good approximation.

Before proceeding any further, we take care of a technicality:
As one would expect, the simulations also suggest that $q$ grows
without bound as $\s$ decreases to zero.  For the weighted
LLL-algorithm, we can prove the following:

\begin{theorem}
For the weighted LLL-algorithm to solve
\begin{equation}
\label{e:the_problem}
(\hat{p}(\s),\hat{q}(\s))
=\arg \min_{(p,q)\in \mathbb{Z}^{\bar{N}+1}}
\left\| B(\s,X) \begin{pmatrix} p\\ q \end{pmatrix} \right\|_{X \oplus 1}
\end{equation}
where $\| \cdot \|_{X \oplus 1}$ is the Euclidean norm
weighted by the direct sum of $X$ and $1$, we have $\lim_{\s \downarrow 0}
\hat{q}(\s)=\infty$.
\end{theorem}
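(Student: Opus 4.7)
The plan is a contradiction argument that balances a positive lower bound on the objective for bounded $q$ against an upper bound, derived from Dirichlet's simultaneous approximation theorem, that vanishes as $\s \downarrow 0$. A preliminary observation is that the statement implicitly requires $\theta$ to be ``irrational'' in the sense that $q\theta \notin \mathbb{Z}^{\bar N}$ for every nonzero integer $q$; this is consistent with the rest of the paper, which is only interested in the genuinely Diophantine case (for rational $\theta$ with a common denominator $q_0$, the minimum is attained at $\hat q = q_0$ and the claim fails).

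Under this irrationality hypothesis, the first step is to establish that for each fixed $M\geq 0$ the quantity
\[
d_M \;:=\; \min\left\{\|X(p-\theta q)\|_2^2 + \s^2 q^2 \;:\; (p,q)\in\mathbb{Z}^{\bar N+1}\setminus\{0\},\ |q|\leq M\right\}^{1/2}
\]
is bounded below by a positive constant independent of $\s \geq 0$. For $q=0$, nonzero $p\in\mathbb{Z}^{\bar N}$ gives $\|Xp\|_2 \geq \min_i x_i > 0$; for each nonzero $q$ with $|q|\leq M$ the irrationality hypothesis forces $\min_p \|X(p-\theta q)\|_2 > 0$, and the minimum over the finitely many admissible $q$'s remains positive.

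The second step is to exhibit a candidate whose objective is small. By Dirichlet's simultaneous approximation theorem, for each integer $N\geq 1$ there exist $1\leq q\leq N$ and $p\in\mathbb{Z}^{\bar N}$ with $\|q\theta-p\|_\infty \leq N^{-1/\bar N}$, hence $\|X(p-q\theta)\|_2 \leq \sqrt{\bar N}\,(\max_i x_i)\,N^{-1/\bar N}$ and the objective is at most $C_1 N^{-2/\bar N} + \s^2 N^2$ for a constant $C_1=C_1(\bar N,X)$. Balancing via $N = \lfloor \s^{-\bar N/(\bar N+1)} \rfloor$ yields an upper bound of order $\s^{2/(\bar N+1)}$, which tends to $0$.

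Combining the two estimates closes the argument: if $\hat q(\s)$ did not tend to infinity, some subsequence $\s_n \downarrow 0$ would satisfy $|\hat q(\s_n)|\leq M$, forcing the weighted norm at the minimizer to be at least $d_M > 0$ along that subsequence; but the Dirichlet candidate shows it is $\leq O(\s_n^{1/(\bar N+1)}) \to 0$, contradicting optimality. Therefore $|\hat q(\s)| \to \infty$. The main obstacle is really the first step: ensuring $d_M > 0$ uniformly in $\s$ is where the irrationality of $\theta$ is decisive, while the upper bound is a routine balanced application of the box principle against the penalty $\s^2 q^2$.
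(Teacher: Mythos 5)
Your proposal is correct and follows the same contradiction skeleton as the paper's proof: assume $\hat{q}(\s)$ stays bounded, exhibit a competing lattice vector whose weighted norm tends to zero as $\s \downarrow 0$, and contradict optimality. Where you depart from the paper is in how that competitor is produced and in what you make explicit. The paper introduces $(\tilde{p},\tilde{q})=\arg\min_{(p,q)}\|p-\theta q\|_X$, which as literally written is ill-posed --- the infimum over nonzero $(p,q)$ is $0$ and is not attained when $\theta$ is irrational (and is attained trivially at the origin otherwise) --- and then asserts a gap $\delta(\s)\ge\delta_{\min}>0$ without deriving it. You instead invoke Dirichlet's box principle directly to build a candidate with objective $O(\s^{2/(\bar{N}+1)})$, and you prove the lower bound $d_M>0$ for $|q|\le M$ by finiteness plus the standing irrationality hypothesis, which you are also right to state explicitly since the theorem fails when $q\theta\in\mathbb{Z}^{\bar{N}}$ for some $q\neq 0$. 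So: same high-level route, but your version replaces an under-specified ``best approximation'' with a concrete Dirichlet competitor and surfaces the hypothesis the argument actually needs.
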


\begin{proof}
Assume that there exist $\s_{\mathrm{min}}$ and $q_{\max}$ such that,
$\forall \s \leq \s_{\min}$, we have $q \leq q_{\max}$.  Consider
\eqref{e:the_problem} for any $0<\s\leq \s_{\min}$. By contradicting
hypothesis, $\hat{q}(\s) \leq q_{\max}$.  The above yields a
Diophantine approximation of $\theta$ but not the optimal one as $\s
\ne 0$. Now define
\begin{equation*}
  (\tilde{p},\tilde{q})=\arg \min_{(p,q)\in \mathbb{Z}^{\bar{N}+1}} \|p-\theta q\|_X
\end{equation*}
along with
\begin{equation*}
  \delta(\s)=\|\hat{p}(\s)-\theta\hat{q}(\s)\|_X^2-\|\tilde{p}-\theta\tilde{q} \|_X^2.
\end{equation*}
Observe that there exists a lower bound $\delta_{\min}$ such that
$\delta(\s) \geq \delta_{\min}>0$ as
$\|\hat{p}(\s)-\theta\hat{q}(\s)\|_X$ cannot reach its minimum since
$\hat{q}(\s) \leq q_{\max}$.  Now, consider the original
problem~\eqref{e:the_problem} with $\s<\min
\left\{\frac{\sqrt{\delta_{\min}}}{\sqrt{2}\tilde{q}},\s_{\min}\right\}$.
With this choice, we have
\begin{equation*}
  (\s\tilde{q})^2
  < \tfrac{\delta_{\mathrm{\mathrm{min}}}}{2}
  <\tfrac{\delta(\s)}{2}
  <\tfrac{\delta(\s)}{2}+(\s\hat{q}(\s))^2.
\end{equation*}
Then we have
\begin{align*}
  \left\| \begin{pmatrix}
           \tilde{p}-\theta \tilde{q}b \\
           \s\tilde{q}
           \end{pmatrix} \right\|_{\diag(X,1)}^2
  &=    \|\tilde{p}-\theta \tilde{q}\|_X^2+(\s\tilde{q})^2 \\
  &\leq \|\hat{p}(\s)-\theta \hat{q}(\s)\|_X^2 - \delta(\s) +
            (\s\hat{q}(\s))^2+\tfrac{\delta(\s)}{2}\\
  &= \|\hat{p}(\s)-\theta \hat{q}(\s)\|_X^2+(\s\hat{q}(\s))^2-\tfrac{\delta(\s)}{2}\\
  &=\left\|\begin{pmatrix}
                \hat{p}(\s)-\theta \hat{q}(\s) \\
                \s\hat{q}(\s)
                \end{pmatrix}\right\|_{X \oplus 1}^2 - \frac{\delta(\s)}{2}\\
  &<\left\|\begin{pmatrix}
                \hat{p}(\s)-\theta \hat{q}(\s) \\
                \s\hat{q}(\s)
                \end{pmatrix}\right\|_{X \oplus 1}^2
       -\frac{\delta_{\mathrm{min}}}{2}.
\end{align*}
The above is clearly a contradiction to the optimality of $(\hat{p}(\s),\hat{q}(\s))$.
\end{proof}

Note that the result \emph{appears} trivial from
Eq.~\eqref{e:LLLapprox} except that the behavior of the last component
of the first vector of the reduced basis has not yet been explored in
the weighted case.

Comparison between the weighted LLL-algorithm, $X(\theta q -p)$, and
the one of Section~\ref{s:simult_dio_approx}, $Y^{-1}(\theta q - Y
\bar{p})$, indicates that a good choice of the weighting might be
$X=Y^{-1}$.  This is only a guiding idea, as $X=Y^{-1}$ would mean
that $Y\bar{p}=p$, that is, $Y \times \mbox{Dirichlet numerator}
(Y^{-1} \theta)=\mbox{Dirichlet numerator}(\theta)$, which does not
hold exactly.

For practical computation of the time steps $\tau=q$, we must find
numerators $p_{k\ell}$ that fulfill the odd/even constraints using
the LLL-algorithm. The nonuniform variant introduced above makes it
simpler to find suitable parameters $X$ and $\s$, but a search is still
required.  To automate the search we use a standard genetic algorithm
to find weight vectors $X$ with a user-defined $\s$ that minimize the
number of parity constraint violations of the $p_{k\ell}$.  This
works well in most cases, requiring only a few iterations (typically
up to $5$) for reasonably sized populations (about $200$). We suggest
that the standard crossover and mutation operators could be adjusted
to improve the performance of the search.  In particular, increasing
the likelihood of changing the $X$ values corresponding to
denominators $p_{k\ell}$ that violate a constraint, and increasing
the likelihood of retaining $X$ values for which the corresponding
$p_{k\ell}$ do not violate the constraints may improve performance.

\begin{myexample}[Weighted LLL-Algorithm.]
  Consider a ring with $N=7$ spins and $p_{\max}(1,3)$.  There are
  four eigenspaces with projectors $\Pi_k$, and three rationally
  independent transition frequencies $\omega_{k,k+1} = (-0.7530,
  -1.6920, -1.3569)/\pi$.  Noting that $s_k = \Sgn
  \bra{3}\Pi_k\ket{1}$ for $k=0,\dotsc,3$ yielding $\vec{s} =
  (s_0,s_1,s_2,s_3) = (1,-1,-1,1)$, we choose $\omega_{12}$ as
  reference frequency and set
  \begin{equation}
    \vec{\theta}
    = 2 \omega_{12}^{-1}(\omega_{01},\omega_{23})^T
    = (0.8901, 1.6039)^T
  \end{equation}
  with corresponding constraints $\vec{s'} = (1,1)$, which means that
  the numerators $p_k$ in the simultaneous Diophantine approximation
  of $\vec{\theta}$ must both be odd.

  Applying the classical LLL-algorithm to solve the simultaneous
  Diophantine approximation for $\theta$ yields rational
  approximations of very high accuracy, as shown in
  Fig.~\ref{f:unweighted}.  However, most of the resulting
  approximations $p_k/q$ do not satisfy the parity constraints.  Using
  the weighted LLL-algorithm and varying the diagonal scaling vector $X$ enables
  us to find solutions of arbitrary accuracy, shown in
  Fig.~\ref{f:weighted}, all of which satisfy the parity constraints
  for the numerators $p_k$.

  For the approximation $p=(170921,307989)$ and $q=192028$ we obtain
  the transfer time $t_f = 2q/\omega_{12}=7.1308\times 10^5$ (in units
  of $J^{-1}$) and corresponding transfer fidelity
  \begin{equation}
    p_{t_f}(1,3) =  \left|\sum_{k=0}^4 e^{-\imath \lambda_k t_f} \bra{3}\Pi_k\ket{1}\right|^2
    \approx 0.4122,
  \end{equation}
  which is within $1-p_{t_f}(1,3)/p_{\max}(1,3) = 2.41\times 10^{-6}$
  of the maximum transfer fidelity $p_{\max}(1,3)$.
\end{myexample}

\begin{figure}[t]
\subfigure[Unweighted LLL-algorithm]
{\includegraphics[width=0.5\textwidth]{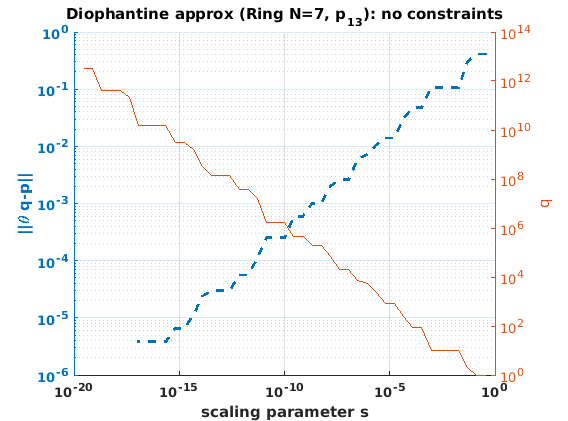}
\label{f:unweighted}}
\subfigure[Weighted LLL-algorithm]
{\includegraphics[width=0.5\textwidth]{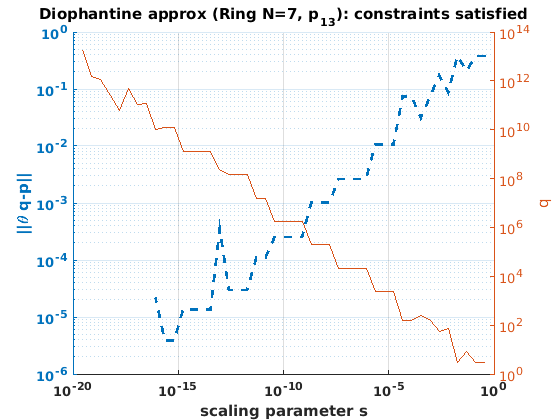}
\label{f:weighted}}
\caption{Behavior of LLL-algorithm applied to the simultaneous
  Diophantine approximation to determine attainability of
  $p_{\max}(1,3)$ in a $N=7$ ring.  The left vertical axis
  in both plots corresponds to the error
  of the approximation (thick broken line) while the right 
  vertical axis corresponds to the transfer time (thin solid line).}
\end{figure}

The previous example illustrates how we can use the weighted
LLL-algorithm to find optimal transfer times that yield very high transfer
fidelities, and how we can control the margins of error and ensure the
parity constraints are satisfied by adjusting the scaling parameter
and diagonal weights in the algorithm.

\subsection{Estimate of Time to Attain Maximum Probability}

Our objective is to find an upper bound on the amount of time $t$ it
takes to achieve $p_t(i,j) \geq p_{\max}(i,j) - \epsilon_{\prob}$,
i.e., $p_{\max}(i,j)-p_t(i,j) \leq \epsilon_{\prob}$.  The approach is
to translate the specification on the probability $\epsilon_{\prob}$
to a specification on the infinity-norm of the simultaneous
Diophantine approximation $\|\epsilon_{\Da}\|_{\infty}$, where
$\epsilon_{\Da}=\theta q-p$.

Proceeding from~\eqref{e:ITF}, recalling that $\Sgn\left(
\bra{i}\Pi_k\ket{j}\right) =: s_k =e^{-\imath \pi(s_k-1)/2-2\pi \imath
  n_k}$, where $n_k$ is some integer, we obtain
\begin{align*}
\sqrt{p_{\mathrm{max}}(i,j)}
=& \left| \sum_{k=0}^{\widetilde{N}-1}  \bra{i}Pi_k\ket{j} s_k  \right| \\
=&  \left| \sum_{k \in K'}  \bra{i}\Pi_k\ket{j} e^{-\imath\frac{\pi}{2}(s_k-1)-2\pi \imath n_k}  \right|\\
=& \left| \sum_{k\in K'}  \bra{i}\Pi_k\ket{j}
       e^{-\imath\frac{\pi}{2}(s_k-1)-2\pi \imath n_k} e^{\imath\frac{\pi}{2}(s_\ell-1)+2\pi \imath n_\ell} \right|\\
=& \left| \sum_{k\in K'}  \bra{i}\Pi_k\ket{j}
        e^{-\imath\frac{\pi}{2}(s_k-s_\ell)-2\pi \imath (n_k-n_\ell)} \right|.
\end{align*}
In the second equation, the sum over $k$ has been replaced by a sum
over $k \in K'$ as states with $\bra{i}\Pi_k\ket{j}=0$ do not contribute to
the sum.  The third equality stems from the fact that for fixed $\ell$,
$e^{\imath\frac{\pi}{2}(s_\ell-1)+2\pi \imath n_\ell}$ is a global
phase factor that is absorbed by the absolute value.

Next, we introduce the attainability condition~\eqref{eq3}, which is
only approximately satisfied using the simultaneous Diophantine
approximation.  The idea is to expose the gap between the left-hand
side and the right-hand side of~\eqref{eq3} when $t$ is constrained to
emerge from the Diophantine approximation:
\begin{align*}
   &\sqrt{p_{\max}(i,j)} \\
  = & \left| \sum_{k \in K'} \left(
           \bra{i}\Pi_k\ket{j} e^{-\imath(\lambda_k-\lambda_\ell)t}
       + \bra{i}\Pi_k\ket{j} \left( e^{- \imath \frac{\pi}{2} (s_k-s_\ell)-2\pi \imath(n_k-n_\ell)}
                                    -e^{-\imath (\lambda_k -\lambda_\ell)t} \right) \right) \right|  \\
\leq& \left| \sum_{k \in K'}
   \bra{i}\Pi_k\ket{j} e^{-\imath (\lambda_k-\lambda_{\ell})t} \right|
   +\left|\sum_{k\in K'} \bra{i}\Pi_k\ket{j}
             \left(e^{- \imath \frac{\pi}{2} (s_k-s_\ell)-2\pi \imath(n_k-n_\ell)}
                   -e^{-\imath (\lambda_k -\lambda_\ell)t} \right)\right| \\
=& \left| \sum_{k \in K'} \bra{i}\Pi_k\ket{j} e^{-\imath \lambda_k t} \right|
    +\left|\sum_{k\in K'} \bra{i}\Pi_k\ket{j} \left( e^{- \imath \frac{\pi}{2} (s_k-s_\ell)-2\pi \imath(n_k-n_\ell)}
                                                             -e^{-\imath (\lambda_k -\lambda_\ell)t} \right)\right| \\
\leq&~ \sqrt{p_t(i,j)} +\sum_{k\in K'}
        \left| e^{- \imath \frac{\pi}{2} (s_k-s_\ell)-2\pi \imath(n_k-n_\ell)} -e^{-\imath (\lambda_k -\lambda_\ell)t} \right| .
\end{align*}
It follows that $\sqrt{p_{\max}(i,j)}-\sqrt{p_t(i,j)} \leq
\sum_{k \in K'} \left| e^{- \imath \frac{\pi}{2} (s_k-s_\ell)} -
e^{-\imath (\lambda_k -\lambda_\ell)t} \right|$.  The trivial identity
\[
   p_{\max}-p_t = \left(\sqrt{p_{\max}}  -  \sqrt{p_t} \right)
                           \left(\sqrt{p_{\max}} + \sqrt{p_t} \right)
\]
then shows that to secure $p_{\max}(i,j)-p_t(i,j)\leq
\epsilon_{\prob}$, it suffices to require
\begin{equation}
\label{e:tspecs}
\sum_{k \in K'} \left| e^{- \imath \frac{\pi}{2} (s_k-s_\ell)}
  -e^{-\imath (\lambda_k -\lambda_\ell)t}  \right|
\leq \frac{\epsilon_{\prob}}{2},
\end{equation}
where it is observed that $\ell \in K'$ is arbitrary.

The last step is to relate the left-hand side of~\eqref{e:tspecs} to
the simultaneous Diophantine approximation error.  Define
\begin{equation}
  \epsilon_{\Da}(k,\ell):= \left|\theta_{k\ell}q -p_{k\ell}\right|, \qquad
  \norm{\epsilon_{\Da}}_\infty=\max_{(k,\ell) \in \S}\epsilon_{\Da} (k,\ell),
\end{equation}
where $\S$ is the subset of linearly independent attainability
equations chosen.  By definition any constraint $c_{k\ell} :=
\omega_{k\ell} t - \tfrac{1}{2}(s_k-s_\ell) = 0 \mod 2$ with
$\omega_{k\ell}=(\lambda_k-\lambda_\ell)/\pi$ can be written as a
linear combination of constraints with $(k',\ell') \in \S$, $c_{k\ell}
= \sum_{(k',\ell')\in\S} b_{k'\ell'} c_{k'\ell'}$, with coefficients
$b_{k'\ell'} \in \{0,\pm 1\}$.  Furthermore, given $\omega_{mn} \in
\S$ with $s_m=s_n$ and setting $t = 2\tau/\omega_{mn}$ with $\tau \in
\mathbb{N}$ and $\theta_{k\ell} = 2\omega_{k\ell}/\omega_{mn}$, we can
write the constraints as $c_{k\ell} = \theta_{k\ell} \tau -
\tfrac{1}{2}(s_k-s_\ell)$ for $(k,\ell) \in \S$.  Given a Diophantine
approximation that satisfies the parity constraints, $c_{k\ell}=
\epsilon_{\Da}(k,\ell) \mod 2$ for $(k,\ell)\in\S$, and $c_{k\ell}\le
\bar{N} \norm{\epsilon_{\Da}}_\infty$ for $(k,\ell)\not\in\S$, where
$\bar{N} = |\S|-1$ is the number of independent constraints reduced by
$1$.  Thus we have
\begin{align*}
\sum_{k \in K'} \left| e^{- \imath \frac{\pi}{2} (s_k-s_\ell)}-e^{-\imath \pi \omega_{k\ell}t} \right|
& =\sum_{k\in K'} \left| 1 - e^{- \imath \pi c_{k\ell}} \right| \\
&\le  |K'| \max_{k\in K'} \left| 1 - e^{- \imath \pi c_{k\ell}} \right| \\
&\le 2|K'|\left| \sin \left( \frac{\pi}{2} \bar{N} \|\epsilon_{\Da}\|_\infty \right) \right|.
\end{align*}
From the above string of inequalities, it follows that for the
attainability accuracy $\epsilon_{\rm prob}$ to be reached, it is
sufficient to take
\begin{equation}
\label{e:general}
2|K'|\left| \sin \left( \tfrac{\pi}{2} \bar{N}
  \|\epsilon_{\Da}(q)\|_\infty\right) \right|
  <\frac{\epsilon_{\mathrm{prob}}}{2}.
\end{equation}
We now summarize the situation we have reached:
\begin{theorem}
\label{t:time_estimate}
For homogeneous rings the ITF specification $p_t(i,j) \geq p_{\max}(i,j) -
\epsilon_{\mathrm{prob}}$ is achieved at time $t=2q/\omega_{mn}$ (in
$1/J$ units) if $q$ is chosen so that simultaneous Diophantine
approximation error $\epsilon_{\Da}(q) := \vec{p}-\vec{\theta} q$ has
its infinity norm satisfying~\eqref{e:general} and $\omega_{mn}$ is
the reference transition with respect to which
$\vec{\theta}=(\theta_{k\ell})$ was defined in \eqref{e:theta}.
\end{theorem}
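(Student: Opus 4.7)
The plan is essentially to collect and organize the chain of inequalities already derived in the text preceding the theorem statement, since those calculations do all the real work. I would first set up notation: fix the pair $(i,j)$, pick a reference index $\ell \in K'$ such that $s_\ell=s_m$ for some $(m,n)\in\S$ with $s_m=s_n$ (so that $\omega_{mn}$ can serve as the reference transition), and parametrize $t = 2q/\omega_{mn}$ so that $(\lambda_k-\lambda_\ell)t/\pi = \theta_{k\ell}q$ modulo the trivial exact equation at $(m,n)$.

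Next, I would reproduce the two-step estimate: the triangle-inequality chain in the text yields
\begin{equation*}
\sqrt{p_{\max}(i,j)} - \sqrt{p_t(i,j)} \;\le\; \sum_{k\in K'}\left| e^{-\imath\frac{\pi}{2}(s_k-s_\ell)} - e^{-\imath(\lambda_k-\lambda_\ell)t}\right|,
\end{equation*}
and the identity $p_{\max}-p_t = (\sqrt{p_{\max}}-\sqrt{p_t})(\sqrt{p_{\max}}+\sqrt{p_t})$ combined with $\sqrt{p_{\max}}+\sqrt{p_t}\le 2$ converts this to a bound on $p_{\max}(i,j)-p_t(i,j)$. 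It then suffices to control the right-hand side by $\epsilon_{\prob}/2$, which is exactly~\eqref{e:tspecs}.

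The third step is to replace the exponentials on the right by the Diophantine error. Using $c_{k\ell} = \theta_{k\ell}q - \tfrac{1}{2}(s_k-s_\ell)$ one has $e^{-\imath\frac{\pi}{2}(s_k-s_\ell)}-e^{-\imath\pi\theta_{k\ell}q} = 1 - e^{-\imath\pi c_{k\ell}}$ up to a unit factor, and $|1-e^{-\imath\pi c_{k\ell}}| = 2|\sin(\pi c_{k\ell}/2)|$. For $(k,\ell)\in\S$ the parity constraints enforce $c_{k\ell} = \epsilon_{\Da}(k,\ell)\bmod 2$; for $(k,\ell)\notin\S$ one writes $c_{k\ell}$ as a $\pm 1$-combination of at most $\bar{N}$ of the chosen basis constraints, yielding $|c_{k\ell}|\le \bar{N}\|\epsilon_{\Da}\|_\infty$ modulo $2$. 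Bounding the sum over $K'$ by $|K'|$ times its maximum term, and using monotonicity of $|\sin|$ on $[0,\pi/2]$, gives
\begin{equation*}
\sum_{k\in K'}\left| e^{-\imath\frac{\pi}{2}(s_k-s_\ell)}-e^{-\imath(\lambda_k-\lambda_\ell)t}\right| \;\le\; 2|K'|\left|\sin\!\left(\tfrac{\pi}{2}\bar{N}\|\epsilon_{\Da}(q)\|_\infty\right)\right|.
\end{equation*}

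Combining the above with the square-root-to-probability conversion shows that condition~\eqref{e:general} implies $p_{\max}(i,j)-p_t(i,j)\le \epsilon_{\prob}$, which is the claim. The only genuinely delicate point, and the one I would be most careful about, is the step where non-basis constraints $c_{k\ell}$ for $(k,\ell)\notin\S$ are expressed in terms of basis ones: I need to check that the $\{0,\pm 1\}$ coefficients in this decomposition hold because the relevant linear dependencies among the $\omega_{k\ell}$ come from the trivial telescoping relation $\omega_{k\ell}+\omega_{\ell m}=\omega_{km}$, whence the factor $\bar{N}=|\S|-1$ in the bound. Everything else is routine application of the triangle inequality and the Lipschitz bound $|1-e^{\imath x}|\le |x|$ (or equivalently $|\sin|$ monotonicity), and the proof reduces to assembling these pieces into a single inequality.
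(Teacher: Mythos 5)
Your proof is correct and follows essentially the same route as the paper's own derivation: the paper builds Theorem~\ref{t:time_estimate} by first establishing the chain $\sqrt{p_{\max}}-\sqrt{p_t}\le\sum_{k\in K'}|e^{-\imath\frac{\pi}{2}(s_k-s_\ell)}-e^{-\imath(\lambda_k-\lambda_\ell)t}|$, converting it to a bound on $p_{\max}-p_t$ via the difference-of-squares identity (with $\sqrt{p_{\max}}+\sqrt{p_t}\le 2$), rewriting each summand as $|1-e^{-\imath\pi c_{k\ell}}|$, and bounding $|c_{k\ell}\bmod 2|\le\bar{N}\|\epsilon_{\Da}\|_\infty$ by expressing non-basis constraints as $\{0,\pm 1\}$-combinations of the basis ones (which indeed rests on the telescoping $c_{k\ell}+c_{\ell m}=c_{km}$, valid both for the $\omega$'s and the $\tfrac{1}{2}(s_k-s_\ell)$ terms). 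You correctly identify the step where care is needed, and your assembly of the pieces matches the paper's; the theorem as stated is really a summary of this derivation.
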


There are many simultaneous Diophantine approximation schemes.  If we
retain the Dirichlet-good one with even/odd constraints on the
numerators, under the assumption that the algorithm of
Section~\ref{s:simult_dio_approx} converges, the error bound is
$\|\epsilon_{\Da}\|_\infty \leq 2/q^{1/\bar{N}}$, and we obtain the
further sufficient condition
\begin{equation*}
  2|K'|\left| \sin \left( 
    \frac{\pi\bar{N}}{q^{1/\bar{N}}}\right) \right|
    <\frac{\epsilon_{\prob}}{2}.
\end{equation*}
A minimum $q$ that guarantees $\epsilon_{\prob}$ is easily
extracted from the above inequality
\begin{equation}
\label{e:with_Dirichlet}
  q \geq \left(\frac{\pi
    \bar{N}}{\sin^{-1}\left(\frac{\epsilon_{\prob}}{4|K'|}\right)}\right)^{\bar{N}}
  \approx \left( \frac{4\pi\bar{N} |K'|}{ \epsilon_{\prob}} \right)^{\bar{N}},
\end{equation}
where the latter approximation uses $\sin(x)\approx x$ and is valid
if $x = \epsilon_{\prob}/(4|K'|)\ll 1$.

As an example will soon show, contrasting the above with numerical
simulations of Eq.~\eqref{e:ITF} reveals that the bound
$O\left(\bar{N}^{\bar{N}}\right)$ is very conservative, mainly because
the continuous-time dynamics on the torus was converted to a
discrete-time dynamics.  The conservativeness is somewhat mitigated by
the dimension reduction achieved by the elimination of dark states and
symmetries that reduce the number of relevant eigenspaces.  For
example, for homogeneous rings $\bar{N}\approx \tfrac{1}{2}N$ rather
than $N$.  Further improvement of the scaling behavior could be
achieved by utilizing tighter simultaneous Diophantine
approximations~\cite[Th. 2]{MSC2011},\cite{Nowak1984}, but at the
expense of significantly complicating the notation.  The reward for
the conservativeness of this bound is that it is \emph{quite general}
for rings with uniform coupling and their ITF attainable by the
algorithm of Section~\ref{s:simult_dio_approx}, as it depends on
neither the eigenvalues nor the odd/even pattern.  Furthermore, it
becomes \emph{very general} for any network subject to the mild
modification of replacing $\bar{N}$ by $N$ and $|K'|$ by $N$.

\begin{figure}
  \includegraphics[width=0.49\textwidth]{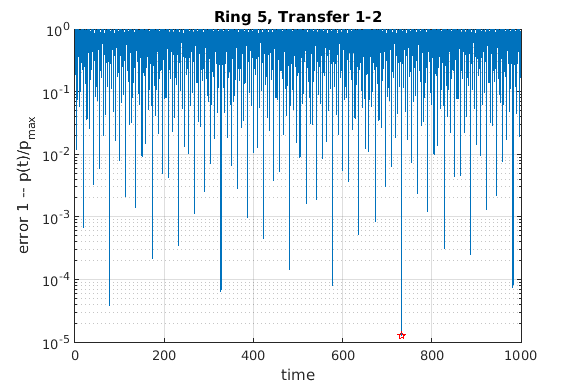}\hfill
  \includegraphics[width=0.49\textwidth]{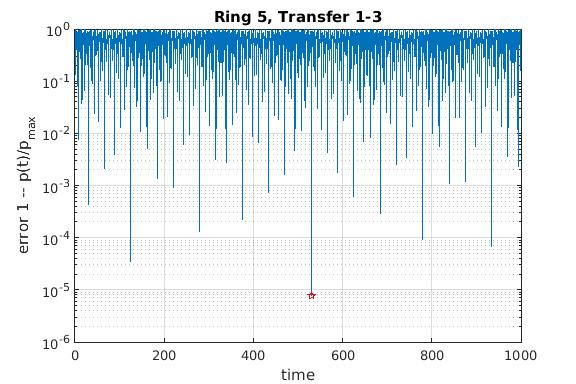}
  \caption{Simulations of transfer probabilities from $1\to 2$ and
    $1\to 3$ for a ring of size $N=5$.}\label{fig:ring5}
\end{figure}

\begin{myexample}[Transfer Times --- simulation results vs bounds.]
For a ring with $N=5$ we have $\widetilde{N}=3$ independent
eigenspaces with two rationally independent transition frequencies and
there are no dark subspaces.  Hence, $|K'|=3$ and we have $\bar{N}=1$
independent $\theta$.  In this case our conservative bound implies we
can get within $\epsilon_{\prob}$ of the maximum transition
probability in time $12\pi/\epsilon_{\prob}$.

In practice simulations suggest that we can we can achieve very high
fidelities in much shorter times. Fig.~\ref{fig:ring5} shows that we
can achieve $>99.99$\% of the maximum transfer fidelity for any two
nodes with distance $1$ in time $t=77.28$, and transfer between two
nodes with distance $2$ in time $t=125$ (in units of $1/J$).  Notice
that the maximum distance between any two nodes in a homogeneous ring
of size $N=5$ is $2$, hence any transfer can be achieved to within
$0.01$\% of the maximum possible in time $t\le 125$.

As observed before, for rings of size $N=6$, the primary transition
frequencies are \emph{not} rationally independent, implying that we do
not have simultaneous attainability.  Indeed, Fig.~\ref{fig:ring6}
(left) shows that the bound $p_{\max}(1,2)$ is not attainable.  Lack
of simultaneous attainability does not imply that all bounds are not
attainable.  Indeed.  Fig.~\ref{fig:ring6} (right) suggests near
perfect transfer between nodes of distance $n=2$.
\label{ex:transferTimes}
\end{myexample}

\begin{figure}
  \includegraphics[width=0.49\textwidth]{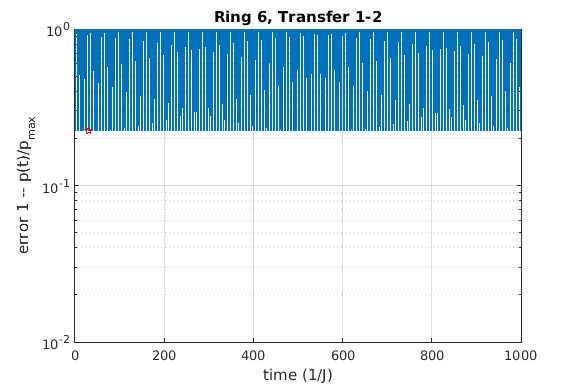}\hfill
  \includegraphics[width=0.49\textwidth]{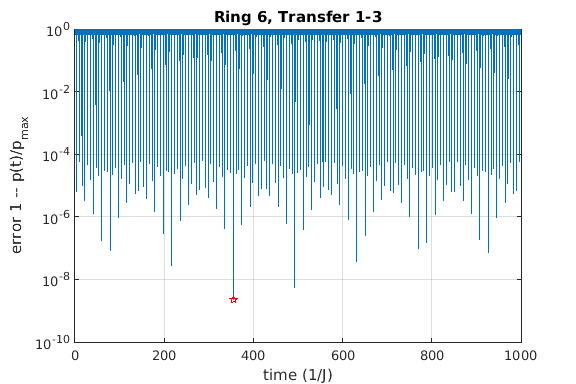}
  \caption{Simulations of transfer probabilities from $1\to 2$ and
    $1\to 3$ for a ring of size $N=6$.}\label{fig:ring6}
\end{figure}

We can use simulations combined with the LLL-algorithm to estimate the
minimum times required to achieve various transfers with a certain
maximum error probability.  The results for rings of size $N=5$ and
$N=7$, which satisfy the rational independence conditions for
simultaneous attainability, shown in Fig.~\ref{fig:scaling}, suggest a
power-law scaling.

\begin{figure}
  \includegraphics[width=0.49\textwidth]{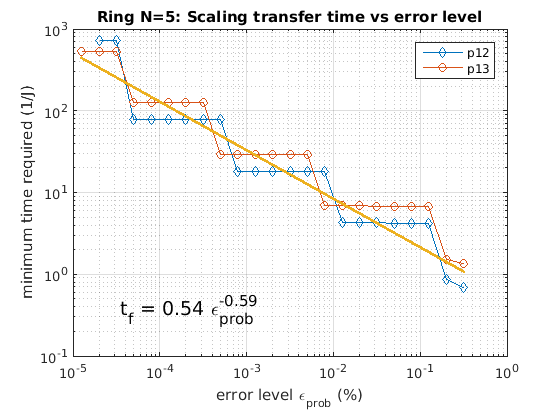}\hfill
  \includegraphics[width=0.49\textwidth]{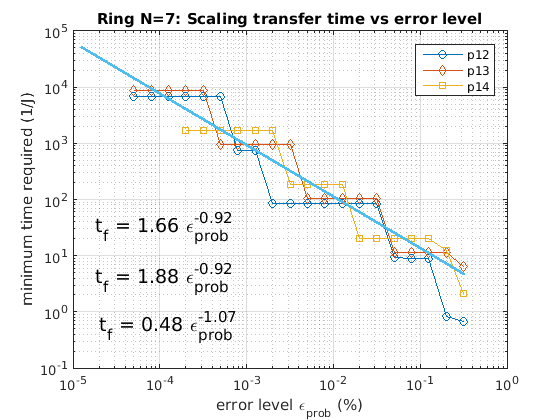}
  \caption{Transfer times estimated from simulations vs the error
    probability for ring of size 5 (left) and 7 (right).}
  \label{fig:scaling}
\end{figure}

Finally, comparing the scaling of the transfer times for rings of
different size in Fig.~\ref{fig:scaling2}(left) suggests that we
have similar scalings for both $N=5$ and $N=7$ although the constant
is larger for $N=7$.  The scaling behavior for various transfers
for a chain of size $N=7$ in Fig.~\ref{fig:scaling2}(right) is
similar but more complicated and the transfer times required to get
close to the upper bounds appear to be significantly longer.

\begin{figure}
  \includegraphics[width=0.49\textwidth]{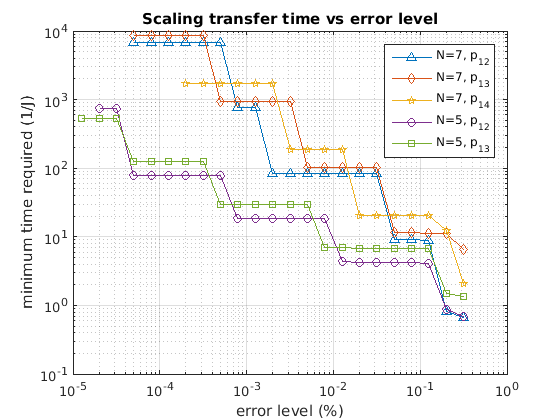}\hfill
  \includegraphics[width=0.49\textwidth]{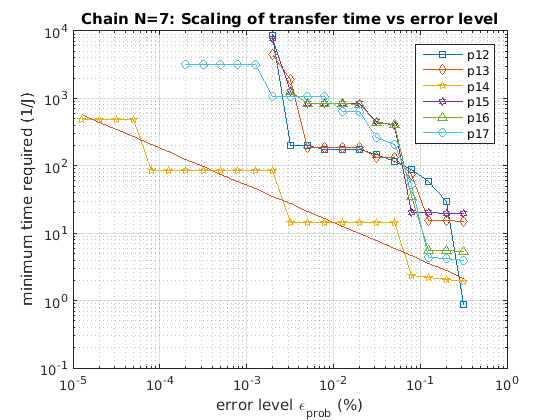}
  \caption{Comparison of scaling of transfer times with error
    probability for rings (left) and scaling for chains of size $N=7$
    (right).}
  \label{fig:scaling2}
\end{figure}

\subsection{Transfer Time versus Decoherence Time}

In general there is a trade-off between the error $\epsilon_{\prob}$
and the transfer time $t_f$ required to achieve
$p_{t_f}(i,j)=1-\epsilon_{\prob}$.  For actual physical realizations of
quantum networks, decoherence is generally a limiting factor.  In this
case the relationship between the error probability and the expected
transfer time can be useful in estimating what error probabilities can
be achieved based on the coherence time of the network
$t_{\mathrm{coh}}$.

For instance, in example~\ref{ex:transferTimes}, we showed that for a
ring of size $N=5$ we would achieve 99\% of the maximum transfer
probability between any two nodes in time $T\le125$ in units of the
inverse coupling rate $J^{-1}$, and we could therefore expect to
closely approximate these transfer fidelities provided the coherence
time of the system is $\gg 100 J^{-1}$.

More generally, Figures~\ref{fig:scaling}-\ref{fig:scaling2} suggest
that we have the power law $t_f=c\epsilon^{-\alpha}_{\prob}$ (in $1/J$
units) , at least for certain types of networks such as rings.  In this
case for the algorithm to work it is necessary that
\begin{equation}
  c\epsilon^{-\alpha}_{\prob} \ll t_\mathrm{coh}.
\end{equation}
This means that realistically, the error probabilities $\epsilon_{\prob}$
attainable are limited and we can expect
\begin{equation}
 \epsilon_{\prob}\gg (c/t_{\mathrm{coh}})^{1/\alpha},
\end{equation}
and the algorithm of Sec.~\ref{s:simult_dio_approx} could be used to
construct a simultaneous Diophantine approximation compatible with
this requirement.

Combining Th.~\ref{t:time_estimate} and Eq.~\eqref{e:with_Dirichlet}
also yields an upper bound on the transfer times for which the effect of
decoherence should definitively be negligible
\begin{equation}
t_f \leq \frac{2}{\omega_{mn}}
 \left( \frac{\pi \bar{N}} {\sin^{-1}\left( \frac{(c/t_{\mathrm{coh}})^{1/\alpha}}{4|K'|}\right)}  \right)^{\bar{N}},
\end{equation}
although we would like to stress here that this bound is excessively
conservative due to the approximations made.  Given a concrete
physical realization of a quantum network with an specific decoherence
model, this information could be used to derive tighter time-dependent
bounds on the transfer fidelities and realistic transfer times.

\section{Information Transfer (In-)fidelity Metric and Geometry}
\label{s:geometry}

In this section, we come back to an issue raised in
Section~\ref{s:attainability}---namely, that the upper bound derived
in Eq.~\eqref{e:ITF} can be justified by the fact that it induces a
metric on the set of vertices.  Unlike the results in the previous
sections, most of the results in this section apply specifically to
rings, although numerical simulations suggest that similar results may
hold for other homogeneous spin networks such as chains.

\subsection{Definition and Motivation of ITF Prametric}

To develop a geometric picture, we can view a spin network as a
pre-metric or more precisely a \emph{prametric space}%
\footnote{We
  prefer to avoid the terminology of \emph{pre-metric space} since
  it is not quite accepted; \emph{prametric} on the other hand is
  the terminology introduced by Arkhangel'skii and
  Pontryagin~\cite{Pontryagin1990}.}
endowed with the prametric that quantifies the \emph{Information
  Transfer Infidelity (ITI)}.  To fix terminology, recall that given a
graph $\mathcal{G}=(\mathcal{V},\mathcal{E})$, or any set of points
$\mathcal{V}$ for that matter, a
\emph{prametric}~\cite[p. 666]{Aldrovandi1995},
\cite[p.23]{Pontryagin1990} is a function $d: \mathcal{V} \times
\mathcal{V} \to \mathbb{R}_{\geq 0}$ such that \textbf{(i)} $d(i,j)
\geq 0$ and \textbf{(ii)} $d(i,i)=0$.

To derive a suitable prametric on the vertex set
$\mathcal{V}=\left\{\ket{i}:i=1,\dots,N\right\}$ from the probability
$p_{\mathrm{max}}$ data, we inspire ourselves from a similar situation
in sensor networks~\cite{EURASIP2008}, where $\mathcal{V}$ is the set
of sensors and a \emph{Packet Reception Rate} $\mathrm{PRR}(i,j)$ is
defined as the probability of successful transmission of the packets
from sensor $i$ to sensor $j$.  After symmetrization of the packet
reception rate, a prametric (in fact, a semi-metric~\cite{Wilson1931,
  Blumenthal1943, Shore1981}) can be defined as $d(i,j)=-\log
\mathrm{PRR}(i,j)$. Should there be a violation of the triangle
inequality, say, $d(i,j) > d(i,k)+d(k,j)$, then the distance between
$i$ and $j$ is redefined as $d(i,k)+d(k,j)$. The importance of the
metric is that it provides a notion of network curvature, which has a
dramatic impact on the traffic flow~\cite{Jonckheere2011, ACC2014} in
a paradigm that extends to quantum chains~\cite{QINP2014}.  Following
sensor network intuition\cite{EURASIP2008}, we define
\begin{equation}
  d(i,j) = -\log p_{\max}(i,j).
\end{equation}
Obviously, $d(i,j) \geq 0$ and, as will be shown in
Theorem~\ref{t:distance_properties}, $d(i,i)=0$.

We could define the time-stamped prametric by $d_t(i,j)=-\log
p_t\left(\ket{i},\ket{j}\right)$ except that in general $d_t(i,i) \ne
0$. To remedy this situation, we could define $d(i,j)=\inf_{t \geq 0}
d_t(i,j)=- \log \sup_{t \geq 0} p_t(i,j)$.  Since, by Cauchy-Schwarz,
$p_t(i,i) \leq 1$ and $p_{t=0}(i,i)=1$, we have $\sup_{t \geq
  0}p_{t}(i,i)=1$ and hence $d(i,i)=0$.  This alternate prametric
definition is equivalent to the earlier one when $p_{\mathrm{max}}$ is
attainable, but it reveals that this prametric makes the network of
finite diameter ($\sup_{i,j}d(i,j) < \infty$) as $N \to \infty$ as
Theorem~\ref{t:distance_properties} will show.  This has the
unfortunate consequence of preventing a genuine large-scale analysis.
As Section~\ref{s:control} will show a bias rectifies this problem
(see also~\cite{QINP2014}).

Generally, this information transfer infidelity prametric is not a
proper distance satisfying the triangle inequality, but for certain
networks such as rings with uniform coupling this prametric will be
shown to define a proper distance.

This quantum mechanical (pra)metric is quite different from the usual
Euclidean distance $d_{\mathbb{E}}$ of the spins in the spintronic
device.  In particular, two spins that are physically close in the
medium may be far quantum mechanically, and conversely.  If two spins
are quantum mechanically far, control is necessary to enable
transmissions that are too weak or forbidden by the natural quantum
mechanical couplings.  This control of information can be viewed as
the problem of controlling the quantum mechanical geometry of the
network.

\subsection{ITF Distance Geometry of Homogeneous Spin Rings}

It could be argued that a prametric is sufficient if we are solely
interested in assessing the difficulty of communication or fidelity of
information transfer between nodes in a network.  However, a proper
metric allows us to investigate other geometric properties such as the
curvature of the network with regard to the ITF.

A \emph{prametric} $d: \mathcal{V} \times \mathcal{V} \to
\mathbb{R}_{\geq 0}$ is a \emph{pseudo-metric} if in addition to {\bf
  (i)} $d(i,j)\ge 0$, $d(i,i)=0$, it satisfies {\bf (ii)}
$d(i,j)=d(j,i)$ and {\bf (iii)} the triangle inequality ($d(i,j) \leq
d(i,k)+d(k,j)$) holds.  A \emph{metric} or \emph{distance} is a
pseudo-metric that has {\bf (iv)} the separation property: $d(i,j)=0$ if
and only if $i=j$.

\begin{theorem}
  \label{t:distance_properties}
  For a quantum ring $(\mathcal{V}_N,\mathcal{E}_N)$ of $N$ uniformly
  distributed spins with XX or Heisenberg couplings, $d_N(i,j):=-\log
  p_{\max}(i,j)$ has the following properties:
\begin{enumerate}
\item For $N$ odd, $(\mathcal{V}_N,d_N)$ is a metric space.
\item For $N$ even, $(\mathcal{V}_N,d_N)$ is a pseudo-metric space that
      becomes metric after antipodal point identification.
\item If $N=p$ or $N=2p$, where $p$ is a prime number, then the
      distances on the space of equivalence classes of spins are
      uniform, i.e., $d_N(i,j)=c_N$ for $i\neq j$.  Otherwise, the
      distances are non-uniform.
\item In all cases $\lim_{N \to \infty} d_N(i,j)=2\log\tfrac{\pi}{2}$,
      $i\not= j \bmod (\tfrac{1}{2}N)$.
\end{enumerate}
\end{theorem}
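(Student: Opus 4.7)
The plan is to translate everything into a scalar problem in $n:=(j-i)\bmod N$, using the translation invariance of the circulant Hamiltonian.  The Dark States example already gives $|\bra{i}\Pi_k\ket{j}|=\tfrac{2}{N}|\cos(2\pi kn/N)|$ for each doublet $\{k,N-k\}$ and $\tfrac{1}{N}$ for the singlet projections at $k=0$ and (if $N$ is even) $k=N/2$.  Since $|\cos 0|=|\cos(\pi n)|=1$, these contributions fuse into the unified expression
\begin{equation*}
\sqrt{p_{\max}(i,j)} \;=\; \frac{1}{N}\sum_{k=0}^{N-1}\bigl|\cos(2\pi kn/N)\bigr| \;=:\; s_N(n).
\end{equation*}
Symmetry $s_N(-n)=s_N(n)$ and the normalisation $s_N(0)=1$ (hence $d_N(i,i)=0$) are immediate; the substitution $n\mapsto n+N/2$ multiplies each cosine by $(-1)^k$ but leaves its modulus unchanged, so $s_N(n+N/2)=s_N(n)$, giving the antipodal invariance for even $N$.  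For the separation property, $s_N(n)=1$ forces every $|\cos(2\pi kn/N)|=1$, i.e.\ $2n/N\in\ZZ$; this picks out only $n=0$ when $N$ is odd and $n\in\{0,N/2\}$ when $N$ is even, establishing the non-degeneracy required in items 1 and 2.

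To get items 3 and 4 in one stroke I substitute the Fourier series $|\cos x|=\tfrac{2}{\pi}+\tfrac{4}{\pi}\sum_{m\ge 1}\tfrac{(-1)^{m+1}}{4m^2-1}\cos(2mx)$, swap sums, and use $\tfrac{1}{N}\sum_{k=0}^{N-1}\cos(4\pi mkn/N)=[N\mid 2mn]$.  This collapses the double sum to the closed form
\begin{equation*}
s_N(n) \;=\; s(d), \qquad s(d):=\frac{2}{\pi}+\frac{4}{\pi}\sum_{j\ge 1}\frac{(-1)^{jd+1}}{4j^2d^2-1}, \qquad d:=\frac{N}{\gcd(2n,N)},
\end{equation*}
so $s_N(n)$ depends on $n$ only through the divisor $d$ of $N$.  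For $N=p$ prime, every nonzero $n$ has $d=p$; for $N=2p$ with $p$ odd prime, every $n\notin\{0,p\}$ has $\gcd(2n,2p)=2$ and hence $d=p$.  Both give uniform distances, whereas for other $N$ (e.g.\ $N=9$ gives $d\in\{3,9\}$) different $n$ produce different $d$ and distances are non-uniform.  Item 4 is then immediate: for any fixed $n\notin\{0,N/2\}$, $\gcd(2n,N)\le 2n$ is bounded in $N$, so $d\to\infty$ and the alternating tail drops out, yielding $s(d)\to 2/\pi$ and $d_N(i,j)\to -\log(4/\pi^2)=2\log(\pi/2)$.

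The main obstacle is the triangle inequality $s(d_3)\ge s(d_1)s(d_2)$ for $d_\alpha:=d(n_\alpha)$, since the map $n\mapsto d(n)$ is number-theoretic and no direct interaction between $d_1,d_2,d_3$ is obvious.  I would finesse this by extracting from the closed form the two-sided bound $\tfrac12\le s(d)\le \tfrac23$ for every $d\ge 2$: the extreme $s(2)=\tfrac12$ is exact via the partial-fraction/Leibniz identity $\sum_{j\ge 1}(16j^2-1)^{-1}=(1-\pi/4)/2$; $s(3)=\tfrac23$ is computed directly from $S_3(1)=\tfrac{1}{3}(1+\tfrac12+\tfrac12)$; the upper bound for odd $d\ge 5$ follows from the alternating-series tail estimate $|s(d)-2/\pi|\le 4/(\pi(4d^2-1))<\tfrac23-\tfrac{2}{\pi}$, and for even $d\ge 4$ simply from $s(d)<2/\pi<\tfrac23$.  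With these bounds, the nontrivial case $d_1,d_2\ge 2$ yields $s(d_1)s(d_2)\le 4/9<\tfrac12\le s(d_3)$, while the degenerate case $d_i=1$ (i.e.\ $n_i\in\{0,N/2\}$) forces $d_3=d_{3-i}$ and the inequality becomes an equality.  This closes the (pseudo-)metric statements in items 1 and 2.
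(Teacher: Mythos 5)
Your proof is correct and takes a genuinely different route to the triangle inequality than the paper does. The paper proceeds by writing $\sqrt{p_{\max}(i,m)}$ and $\sqrt{p_{\max}(m,j)}$ as Fourier-type sums $\tfrac1N\sum_k \alpha_k\rho_N^{k(m-i)}$ and $\tfrac1N\sum_{k'}\beta_{k'}\rho_N^{k'(j-m)}$, multiplying, introducing $\gamma_k=\sum_{k'}\alpha_k\beta_{k'}\rho_N^{(k'-k)(j-m)}$, and then bounding $|\gamma_k|\le N$ term-by-term before re-collecting into cosines to recognise $\sqrt{p_{\max}(i,j)}$. You instead reduce everything to a single scalar function $s(d)$ of the arithmetic invariant $d=N/\gcd(2n,N)$ (using translation invariance of the circulant $\bar H$ and the Fourier series of $|\cos x|$ together with the orthogonality sum $\tfrac1N\sum_k\cos(4\pi mkn/N)=[N\mid 2mn]$), and then derive the triangle inequality from the crude two-sided bound $\tfrac12\le s(d)\le\tfrac23$ for $d\ge 2$, which makes $s(d_1)s(d_2)\le\tfrac49<\tfrac12\le s(d_3)$ automatic once the degenerate case $d_i=1$ is handled. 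Your reduction buys several things the paper's argument does not: items 3 and 4 become immediate (uniformity is read off from ``$d$ is constant,'' and the limit $2\log\tfrac\pi2$ comes from $d\to\infty$ with no Riemann-sum argument), the (pseudo-)metric axioms (i), (ii), the antipodal invariance and the separation property all fall out of elementary properties of $s_N$, and the bound $\tfrac12\le s\le\tfrac23$ is a cleaner structural explanation for why the triangle inequality holds. Two minor remarks: your bound $|s(d)-2/\pi|\le 4/(\pi(4d^2-1))<\tfrac23-\tfrac2\pi$ in fact kicks in already at $d=4$, so covering all odd $d\ge5$ is comfortable; and like the paper's proof, your item-3 argument implicitly assumes $d\mapsto s(d)$ is injective on divisors $\ge 2$ (``different $d$ produce different distances''), which the paper also does not prove; it would be worth closing this by noting the closed forms $s(d)=1/(d\sin(\pi/(2d)))$ for odd $d$ and $s(d)=\cot(\pi/(2d))/d$ for even $d$ (obtainable from $s(d)=\tfrac1d\sum_{k=0}^{d-1}|\cos(\pi k/d)|$), which show $s$ is strictly monotone within each parity class and that odd-$d$ values lie strictly above $2/\pi$ while even-$d$ values lie strictly below.
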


\begin{proof}
To show that $(\mathcal{V}_N,d_N)$ is a pseudo-metric space we need to
verify that {\bf (i)} $d_N(i,i)=0$, {\bf (ii)} $d_N(i,j)=d_N(j,i)$, and {\bf (iii)} the
triangle inequality holds.  For a metric space we must further have
{\bf (iv)} $d_N(i,j)\neq 0$ unless $i=j$.

{\bf (i)} is clearly satisfied as the projectors onto the eigenspaces
are a resolution of the identity, $\sum_k\Pi_k = I$, and thus for any
unit vector $\ket{i}$, we have $\sum_{k=1}^N |\bra{i}\Pi_k\ket{i}|=
\sum_{k=1}^N \|\Pi_k \ket{i}\|^2 =1$.  {\bf (ii)} follows from
$|\bra{i}\Pi_k\ket{j}|=|\bra{j}\Pi_k\ket{i}|$.  The proof of the
remaining properties relies on the circulant matrix property of the
Hamiltonian $\H$ in the single excitation subspace $\mathcal{\H}$, as
shown in Eq.~\eqref{e:C_N} and Table~\ref{t:eigenstructure}.

Observe in Table~\ref{t:eigenstructure} the double eigenvalues
$\lambda_k=\lambda_{N-k}$, except for $k=0$ and $k=\tfrac{1}{2}N$ if
$N$ even.  From Table~\ref{t:eigenstructure}, each of these double
eigenvalues has two general complex conjugate eigenvectors.  These
general eigenvectors need not be orthogonal, but observing that
$\inner{v_k}{v_\ell}=\delta_{k\ell}$ and $\inner {v_k}{v_k^*}=0$,
where $v_k^*$ denotes the complex conjugate, it follows that
\begin{equation}
\begin{split}
  \ket{\bar{v}_0} &= \ket{v_0} = \tfrac{1}{\sqrt{N}}(1,1,\ldots)^T,\\
  \ket{\bar{v}_k} &= \ket{v_k}, \quad \ket{v_{N-k}} =\ket{v_k^*}, \quad k=1,\ldots N'=\lfloor\tfrac{N-1}{2} \rfloor, \\
  \ket{\bar{v}_{N/2}} &= \ket{v_{N/2}} = \tfrac{1}{\sqrt{N}}(1,-1,\ldots )^T, \quad \mbox{if }N\mbox{ is even},
\end{split}
\end{equation}
defines an orthonormal basis of $\mathcal{\H}$.  Furthermore, in the
basis in which $\H$ is circulant, we have $\ket{i}=e_i$, where
$\{e_i:i=1,...,N\}$ is the natural basis of $\mathbb{C}^N$.
\begin{align}
 |\bra{i}\Pi_0\ket{j}| &= |\inner{i}{\bar{v}_0}\inner{\bar{v}_0}{j}| = \tfrac{1}{N}, \label{e:harmonize1}\\
 |\bra{i}\Pi_k\ket{j}|
 &= |\inner{i}{\bar{v}_k}\inner{\bar{v}_k}{j} + \inner{i}{\bar{v}_{N-k}}\inner{\bar{v}_{N-k}}{j}| \nonumber\\
 &= \left|\rho_N^{ki} (\rho_N^{kj})^* +   (\rho_N^{ki})^* \rho_N^{kj}\right|\tfrac{1}{N}           \nonumber\\
 &= \left|\rho_N^{k(i-j)}+\rho_N^{-k(i-j)}\right|\tfrac{1}{N} = \tfrac{2}{N}\left|\cos(\tfrac{2\pi k(i-j)}N)\right|,  \label{e:harmonize2}\\
 |\bra{i}\Pi_{N/2}\ket{j}|
 &= | \inner{i}{\bar{v}_{N/2}}\inner{\bar{v}_{N/2}}{j}| = \tfrac{1}{N}          \label{e:harmonize3}.
\end{align}
Summing over all eigenspaces $k=0,\ldots,\lfloor N/2 \rfloor$ gives
\begin{align}
\label{eq:pmax}
 \sqrt{p_{\max}(i,j)}
=& \left\{ \begin{array}{ll}
    \frac{1}{N}+\frac{2}{N}\sum_{k=1}^{N'}\left|\cos\left(\tfrac{2\pi k(i-j)}{N}\right)\right|,
    & \qquad N=2N'+1, \\
    \frac{2}{N}+\frac{2}{N}\sum_{k=1}^{N'}\left|\cos\left(\tfrac{2\pi k(i-j)}{N}\right)\right|,
    & \qquad N=2N'+2.
\end{array} \right.
\end{align}
For $N=2N'+1$, it is easy to see that $p_{\max}(i,j)=1$ if and only if
$i=j$, hence {\bf (iv)}.  For $N=2N'+2$, on the other hand, we also
have $\left|\cos(\tfrac{2\pi k N/2}{N})\right|=|\cos(\pi k)|=1$, and
thus $d(i,j)=0$ for $i-j=\tfrac{1}{2}N$, i.e., the distance vanishes
for antipodal points, and thus $d(i,j)$ is at most a pseudo-metric.
However, noting that 
can identify antipodal points $\ket{j}$ and $\ket{j+N'+1}$, let $d$ be
defined on the set of equivalence classes $[\ket{j}]$ for $j=1,\ldots,
N'+1$ instead. (The antipodal identification preserves the ring
structure).  At this stage, $d$ is a
\emph{semi-metric}~\cite{Wilson1931, Shore1981,Blumenthal1953}, that
is, it satisfies all axioms of a metric except the triangle
inequality.

To prove the triangle inequality, we show that $\sqrt{p_{\max}(i,m)}
\sqrt{p_{\max}(m,j)} \le \sqrt{p_{\max}(i,j)}$.  The
definition~\eqref{e:ITF} of $p_{\max}$ rewritten in terms of the
eigenvectors of $\bar{H}$
using~\eqref{e:harmonize1}-\eqref{e:harmonize3} gives
\begin{align*}
  \sqrt{p_{\max}(i,m)}
  &= \frac{1}{N} \sum_{k=0}^{N-1}  \alpha_k  \rho_N^{k(m-i)}, \\
  \sqrt{p_{\max}(m,j)}
  &= \frac{1}{N} \sum_{k'=0}^{N-1} \beta_{k'}\rho_N^{k'(j-m)},
\end{align*}
where $\alpha_k=s_k(i,m)=\Sgn\left(\rho_N^{k(m-i)}+\rho_N^{-k(m-i)}
\right)\in\{\pm 1, 0\}$ is rewritten explicitly in terms of the
eigenvectors rather than as in Section~\ref{s:attainability} and
$\beta_{k'}=s_{k'}(m,j)$.  Setting
\begin{equation*}
  \gamma_k = \sum_{k'=0}^{N-1} \alpha_k \beta_{k'}\rho_N^{(k'-k)(j-m)}
\end{equation*}
we obtain
\begin{align*}
   & \sqrt{p_{\max}(i,m)} \sqrt{p_{\max}(m,j)} \\
 = & \frac{1}{N^2} \sum_{k,k'=0}^{N-1} \alpha_k \beta_{k'}
     \rho^{k(m-i)}_N \rho^{k'(j-m)}_N \\
 = & \frac{1}{N^2} \sum_{k,k'=0}^{N-1} \alpha_k \beta_{k'}
     \rho_N^{k(j-i) + (k'-k)(j-m)} \\
 = & \frac{1}{N^2} \sum_{k=0}^{N-1} \gamma_k \rho_N^{k(j-i)}
 = \left|\frac{1}{N^2} \sum_{k=0}^{N-1} \gamma_k \rho_N^{k(j-i)} \right|.
\end{align*}
The final equality follows because the LHS and thus the RHS are known
to be real and positive.  Furthermore, as $\rho_N$ is a root of unity,
$|\rho_N|=1$, and recalling $|\alpha_k|=|\beta_{k'}|=1,0$,
\begin{align*}
  |\gamma_k|
    &= \left|\rho_N^{k(m-j)} \sum_{k'=0}^{N-1} \alpha_k \beta_{k'}\rho_N^{k'(j-m)} \right| \\
    &\le \left|\rho_N^{k(m-j)}\right|\cdot \sum_{k'=0}^{N-1} \left|\alpha_k\beta_{k'}\rho_N^{k'(j-m)} \right|
   \leq N,
\end{align*}
where the last inequality allows for the presence of dark states.
Again we have $\rho_N^{(N-k)(m-j)}=\rho_N^{-k(m-j)}$, and as the LHS
above is known to be real, we know that we must have
$\gamma_{k}=\gamma_{N-k}$.  Hence, we can again collect exponential
terms pairwise to obtain cosines, which gives for $N=2N'+1$:
\begin{align*}
 \left|\frac{1}{N^2} \sum_{k=0}^{N-1}\gamma_k \rho_N^{k(j-i)}\right|
 &= \left|\frac{\gamma_0 }{N^2}
    + \frac{1}{N^2}\sum_{k=1}^{N'} 2 \gamma_k \cos\left(\tfrac{2\pi k(j-i)}{N}\right) \right|\\
 &\le \frac{|\gamma_0|}{N^2}
    + \frac{2}{N^2}\sum_{k=1}^{N'}|\gamma_k| \left|\cos\left(\tfrac{2\pi k(j-i)}{N}\right)\right| \\
 &\le \frac{1}{N} +
 \frac{2}{N}\sum_{k=1}^{N'}\left|\cos\left(\tfrac{2\pi k(j-i)}{N}\right)\right| \\
 &= \sqrt{p_{\max}(i,j)}.
\end{align*}
For $N=2N'+2$, we simply replace $\gamma_0$ by
$\gamma_0+\gamma_{N'+1}$ above to obtain
\begin{align*}
 \left|\frac{1}{N^2} \sum_{k=0}^{N-1}\gamma_k \rho_N^{k(j-i)}\right|
 & \le \frac{2}{N} +
 \frac{2}{N}\sum_{k=1}^{N'}\left|\cos\left(\tfrac{2\pi k(j-i)}{N}\right)\right| \\
 & = \sqrt{p_{\max}(i,j)}.
\end{align*}
This proves {\bf (iii)} and hence parts (1) and (2) of the theorem.

To establish (3), we note that if $N=2N'+1$ is prime then
\begin{equation*}
  \sum_{k=1}^{N'}\left|\cos\left(\tfrac{2\pi k(i-j)}{N}\right)\right|
 = \sum_{k=1}^{N'}\left|\cos\left(\tfrac{2\pi k}{N}\right)\right|.
\end{equation*}
If $N$ is not $p$ or $2p$ then $N$ and $(i-j)$ will have factors
(which can be canceled) in common for some $(i-j)$ but not for others
and hence we will obtain different distances.

To establish (4), letting $N \rightarrow \infty$, it is easily seen
that the dependency on $i,j$ is eliminated provided $i \not= j \bmod
(\tfrac{1}{2}N)$. Hence, taking the norm of the above and then $-\log(\cdot)$ it
follows that, at the infinite ring limit, the distance is uniform for $i \not=
j + \bmod (\tfrac{1}{2}N)$.  Finally,
\begin{align*}
\lim_{N\to \infty}\sqrt{p_{\mathrm{max}}(i,j)}
  &=\lim_{N\to\infty} \frac{2}{N} \sum_{k=0}^{N/2} |\cos((i-j)2\pi k/N)|\\
  &= \frac{2|i-j|}{\pi}\int_{0}^{\frac{\pi}{2|i-j|}} \cos(|i-j| x) dx \\
  &= \frac{2|i-j|}{\pi |i-j|} [\sin(|i-j| x)]_0^{\frac{\pi}{2|i-j|}}
   = \frac{2}{\pi}
\end{align*}
shows that $\lim_{N \to \infty}d_N(i,j)=2 \log \tfrac{\pi}{2}
\approx 2 \times 0.4516$ for $i \not= j \bmod(N/2)$.
\end{proof}

Case 3 of Theorem~\ref{t:distance_properties}  allows for a very specific
geometrization of the quantum ring in terms of constant curvature
spaces.  Define the $n$-sphere of curvature $\kappa$ as
$\mathbb{S}^{n}_\kappa:=\{x \in \mathbb{R}^{n+1}:\|x\|^2=1/\kappa \}$.
We have the following corollary:

\begin{corollary}
\label{c:sphere_embedding}
The metric space $(\mathcal{V}_p,d_p)$ of $p$ spins ($p\geq3 $ prime)
arranged in a homogeneous ring with uniform ITI distance
$d_p(i,j)=c_p$, $i \ne j$, is isometrically embeddable in
$\mathbb{S}^{p-1}_{\kappa}$ iff
\begin{equation}
\label{e:isometric_embedding}
\kappa \leq \left[ \frac{1}{c_p} \cos^{-1} \left( -\tfrac{1}{p-1} \right) \right]^2.
\end{equation}
%
%
Furthermore, it is \emph{irreducibly}
isometrically embeddable in $\mathbb{S}^{p-2}_{\kappa }$
for
\begin{equation}
\label{e:irreducible}
\kappa= \left[ \frac{1}{c_p} \cos^{-1} \left( -\tfrac{1}{p-1} \right) \right]^2.
\end{equation}
\end{corollary}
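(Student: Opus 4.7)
\emph{Proof Proposal.}
The plan is to reduce the embedding question to a standard Gram-matrix problem. By Theorem~\ref{t:distance_properties}(3), in the case $N=p$ prime, $(\mathcal{V}_p,d_p)$ is the vertex set of an abstract regular simplex: every pair of distinct points is at distance $c_p$. On $\mathbb{S}^n_\kappa \subset \mathbb{R}^{n+1}$ equipped with its intrinsic (geodesic) metric, two points $\vec{x}, \vec{y}$ of norm $r=1/\sqrt{\kappa}$ lie at geodesic distance $r \cdot \angle(\vec{x},\vec{y})$. Hence an isometric embedding is equivalent to the existence of $p$ vectors in $\mathbb{R}^{n+1}$, all of norm $r$, that pairwise subtend the common angle $\theta_\star := c_p\sqrt{\kappa}$ at the origin.

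Rescaling to unit vectors $\vec{u}_i=\vec{x}_i/r$, I would encode this configuration by its Gram matrix $G\in\mathbb{R}^{p\times p}$ with $G_{ii}=1$ and $G_{ij}=\cos\theta_\star$ for $i\neq j$. Writing
\begin{equation*}
    G = (1-\cos\theta_\star)\,I_p + \cos\theta_\star\, J_p,
\end{equation*}
where $J_p$ is the all-ones matrix, the spectrum consists of $\lambda_1 = 1+(p-1)\cos\theta_\star$ (with eigenvector $(1,\ldots,1)^T$) and $\lambda_2 = 1-\cos\theta_\star$ of multiplicity $p-1$. The Gram-reconstruction theorem then gives: such $p$ unit vectors exist in $\mathbb{R}^{n+1}$ if and only if $G\succeq 0$ \emph{and} $\rank G \le n+1$.

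For the ambient dimension $n+1 = p$ (i.e., $\mathbb{S}^{p-1}_\kappa$) the rank constraint is automatic, so only PSD matters. PSD of $G$ is equivalent to $\cos\theta_\star \ge -1/(p-1)$, i.e., $\theta_\star \le \cos^{-1}\!\bigl(-1/(p-1)\bigr)$; substituting $\theta_\star = c_p\sqrt{\kappa}$ recovers~\eqref{e:isometric_embedding}, proving the first part. For the irreducible embedding into $\mathbb{S}^{p-2}_\kappa$ one additionally needs $\rank G \le p-1$; since $\lambda_2 = 1-\cos\theta_\star \ge 1+1/(p-1) = p/(p-1) > 0$ throughout the feasible regime, the rank drop can only come from $\lambda_1 = 0$, forcing $\cos\theta_\star = -1/(p-1)$ and hence equality in~\eqref{e:irreducible}. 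Irreducibility follows at once: $\lambda_2>0$ forbids any further rank drop, so the $p$ vertices genuinely span a $(p-1)$-dimensional subspace and cannot be accommodated on any proper great subsphere.

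The chief hurdle is conceptual rather than computational: confirming that ``isometric embedding'' is meant in the intrinsic/geodesic metric of $\mathbb{S}^n_\kappa$ (which gives the clean relation $\theta_\star = c_p\sqrt{\kappa}$) rather than the chordal one (which would lead to a different, transcendental identity), and keeping the ambient-vs.-sphere dimension bookkeeping straight. Once these conventions are fixed, the entire corollary reduces to reading off the spectrum of $(1-c)I_p + cJ_p$.
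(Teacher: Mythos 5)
Your proposal is correct and reaches the stated conclusions, but it takes a genuinely different route than the paper: the paper gives no argument at all, merely citing Blumenthal's Theorem~63.1 on embeddings into spherical spaces and deferring computational details to the appendix of a companion paper. Your Gram-matrix argument makes the content behind Blumenthal's criterion explicit. The $p$ vertices form a regular simplex at common geodesic distance $c_p$; rescaling to radius $1/\sqrt{\kappa}$ and passing to unit vectors turns embeddability in $\mathbb{R}^{n+1}$ into positive-semidefiniteness and a rank bound on $G=(1-\cos\theta_\star)I_p+\cos\theta_\star J_p$ with $\theta_\star=c_p\sqrt{\kappa}$, and the two eigenvalues $1+(p-1)\cos\theta_\star$ (simple) and $1-\cos\theta_\star$ (multiplicity $p-1$) immediately deliver both the curvature threshold~\eqref{e:isometric_embedding} and the irreducibility at equality~\eqref{e:irreducible}. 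This is more self-contained and transparent than the paper's pointer to the literature.

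Two small remarks. First, you write $\lambda_2=1-\cos\theta_\star\ge 1+\tfrac{1}{p-1}$ ``throughout the feasible regime,'' but in the feasible regime $\cos\theta_\star\ge -\tfrac{1}{p-1}$, so the inequality runs the other way ($\lambda_2\le p/(p-1)$). What you actually need is only that $\lambda_2>0$ for $\theta_\star\in(0,\pi)$ and in particular $\lambda_2=p/(p-1)>0$ at the critical point $\cos\theta_\star=-\tfrac{1}{p-1}$; that is what pins the rank drop to $\lambda_1$ alone and yields irreducibility, so the conclusion stands. Second, to fully close the ``iff'' you should note that the geodesic constraint $\theta_\star\le\pi$ is automatically subsumed, since $\cos^{-1}\bigl(-\tfrac{1}{p-1}\bigr)<\pi$ and the PSD bound is therefore already the binding one; you use this implicitly and it is correct, but it deserves a sentence.
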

\emph{Notes:} In the above, \emph{``irreducibly} embeddable" means that
the embedding cannot happen into a lower-dimensional constant
curvature space.  By convention, $\cos^{-1}$ takes values in
$[\pi/2,\pi]$.
\begin{proof}
This result is a corollary of~\cite[Th. 63.1]{Blumenthal1953}. For the
details, see~\cite[Appendix]{ISCCSP2012}.
\end{proof}
Note that this corollary deals with embeddability \emph{of the vertices
  only;} however, edges can be mapped isometrically as arcs of great
circles on either the sphere of
curvature~\eqref{e:isometric_embedding} or that of
curvature~\eqref{e:irreducible}.  Also note that the symmetry of the
simple $p=3$ case of the circle $\mathbb{S}^1$ circumscribed to a
equilateral triangle is misleading, as in very high dimension ($p \to
\infty$), Eq.~\eqref{e:irreducible} yields $1/\sqrt{\kappa}=:R \to
\frac{c_p}{\pi/2}$, that is, all vertices are mapped to the
half-sphere of radius $R$.

Regarding $N=2p$ in Case 3, we could first do the
anti-podal identification on the \emph{combinatorial} ring
$(\mathcal{V}_{2p},\mathcal{E}_{2p})$, leading to a
$(\mathcal{V}_p,\mathcal{E}_p)$ ring, and then embed
$(\mathcal{V}_p,\mathcal{E}_p)$ as in the preceding corollary.

Regarding Case 4 when $N$ is odd, define $\epsilon:=\max_{i \ne
  j}|d_N(i,j)-2\log (\pi/2)|$.  Then the metric space
$(\mathcal{V}_N,d_N)$ can be mapped isometrically on the sphere
$\mathbb{S}^{N-2}_\kappa$ of radius
$d_\infty/\cos^{-1}\left(-(N-1)^{-1}\right)$ up to an additive
distortion not exceeding $\epsilon$, that is, the embedding is
quasi-isometric~\cite[7.2.G]{Gromov1987}.  The case of an even $N$ is
dealt with as before using anti-podal identification. The geometry of
a genuinely infinite ring ($N=\infty$ rather than $N \to \infty$) is
completely different and is left to future work.

The $N$ even case can be dealt with in a different way.  Rather than
doing, first, a combinatorial anti-podal identification ($i=j$ if
$i-j=0 \mod (\tfrac{1}{2}N)$) and, then, mapping the quotient space
$\mathcal{V}_N/\sim$ to the sphere, we could map the combinatorial
antipodal points to geometrical anti-podal points on the sphere
$\mathbb{S}^{N-2}_\kappa$ with the understanding that geometrical
antipodal points on the sphere are identified to yield the real
projective space $\mathbb{RP}^{N-2}$.  A slight generalization
of~\eqref{e:irreducible} of Corollary~\ref{c:sphere_embedding}
together with 4 of Theorem~\ref{t:distance_properties} yields an
irreducible embedding of $(\mathcal{V}_N,d_N)$ into the sphere of
curvature $\kappa=\left(\left(\cos^{-1}\left(-\frac{1}{N-1} \right)
\right)/\left(2\log\frac{\pi}{2}\right)\right)^2$.  On the other hand,
$\mathbb{RP}^{N-2}$ is usually endowed with the standard curvature $1$
metric of diameter $\pi/2$.  To sum up:

\begin{corollary}
For $N$ even, there is an embedding $\mathcal{V}_N \hookrightarrow
\mathbb{RP}^{N-2}$, which is quasi-isometric for the scaled distance
$d_N\cos^{-1}\left(-\frac{1}{N-1} \right)/\left(4 \log
\frac{\pi}{2}\right)$ on $\mathcal{V}_N$ and the curvature $1$
distance on $\mathbb{PR}^{N-2}$. Furthermore, for $N \to \infty$ the
distortion becomes vanishingly small.
\end{corollary}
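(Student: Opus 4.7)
The plan is to build the map in three composable stages and then control its distortion with Theorem~\ref{t:distance_properties}(4). The paragraph immediately preceding the corollary already supplies the first stage: a sphere embedding
\[
  \iota_\kappa : \mathcal{V}_N \hookrightarrow \mathbb{S}^{N-2}_\kappa, \qquad
  \kappa = \left( \frac{\cos^{-1}\!\left(-\frac{1}{N-1}\right)}{2\log\frac{\pi}{2}}\right)^2,
\]
obtained by the generalization of~\eqref{e:irreducible} of Corollary~\ref{c:sphere_embedding} combined with item~(4) of Theorem~\ref{t:distance_properties}, and constructed so that the combinatorial antipodal pairs $\{i,\,i+N/2\}$ are sent to geometrically antipodal points on the sphere. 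For the second stage, I compose $\iota_\kappa$ with the canonical double cover quotient $q:\mathbb{S}^{N-2}_\kappa\to\mathbb{RP}^{N-2}_\kappa$. Since the only $\iota_\kappa$-preimages identified by $q$ are the antipodal pairs $\{i,\,i+N/2\}$, which are exactly the pairs at ITI distance $0$ (by Eq.~\eqref{eq:pmax}), $q\circ\iota_\kappa$ descends to an injection on $\mathcal{V}_N/\!\sim$, giving the claimed embedding into $\mathbb{RP}^{N-2}$ after the third stage, a constant rescaling to curvature~$1$.

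For the quasi-isometry claim, I would compute the curvature-$1$ projective distance induced on the image and compare it with the prescribed scaled ITI. The rescaling $\kappa\mapsto 1$ multiplies lengths by $\sqrt{\kappa}=\cos^{-1}(-1/(N-1))/(2\log(\pi/2))$; the quotient map $q$, restricted to the range of distances realized by the embedding (which cluster near the asymptotic value $2\log\frac{\pi}{2}\cdot\sqrt{\kappa}$, well below the $\pi$-antipodal distance on the unit sphere), then introduces a further factor of $\tfrac12$ because projective distance equals the smaller of the sphere distance and its supplement for the ring configuration. The product of these two scalings is exactly $\cos^{-1}(-1/(N-1))/(4\log\frac{\pi}{2})$, matching the coefficient in the corollary. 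Hence, up to the error of $\iota_\kappa$, the curvature-$1$ $\mathbb{RP}^{N-2}$ distance between the images of $\ket{i},\ket{j}$ equals $d_N(i,j)\cdot\cos^{-1}(-1/(N-1))/(4\log\frac{\pi}{2})$.

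It remains to bound this error uniformly and show it vanishes. Define $\epsilon_N := \max_{i\not\equiv j\bmod(N/2)}\bigl|d_N(i,j)-2\log\frac{\pi}{2}\bigr|$. All relevant pairwise distances on $\mathcal{V}_N$ equal $2\log\frac{\pi}{2}$ up to additive error $\epsilon_N$, and the spherical-simplex formula places the $\iota_\kappa$-images at pairwise curvature-$\kappa$ sphere distance $c_N := 2\log\frac{\pi}{2}$ exactly, so the discrepancy between scaled ITI and projective distance on the embedded image is $O(\epsilon_N\sqrt{\kappa})$. By Theorem~\ref{t:distance_properties}(4) we have $\epsilon_N\to 0$, and $\sqrt{\kappa}$ is bounded as $N\to\infty$ (it converges to $\pi/(4\log\frac{\pi}{2})$). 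Hence the additive distortion vanishes as $N\to\infty$, proving the quasi-isometry~\cite[7.2.G]{Gromov1987} with arbitrarily small distortion for large $N$.

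The main obstacle I foresee is the reconciliation in the second stage: the formula in the preceding paragraph applies Eq.~\eqref{e:irreducible} with $p = N$ points, but a regular $(N-1)$-simplex on $\mathbb{S}^{N-2}$ has \emph{no} antipodal vertex pairs, so an \emph{exact} isometric embedding sending the $N/2$ combinatorial antipodal pairs to geometric antipodes cannot coexist with uniform equidistance across all $N$ vertices. One must compromise by using the equidistant embedding of the $N/2$ equivalence classes into a great subsphere $\mathbb{S}^{N/2-2}_{\kappa'}\subset\mathbb{S}^{N-2}_\kappa$ and extending by the antipodal map, then show the resulting pairwise distances match the $N$-point formula up to an $O(1/N)$ correction (coming from $\cos^{-1}(-1/(N-1))-\cos^{-1}(-2/(N-2))=O(1/N)$). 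Absorbing this $O(1/N)$ contribution into $\epsilon_N$ preserves the vanishing-distortion conclusion, completing the argument.
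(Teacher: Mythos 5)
Your final paragraph flags a genuine subtlety that the paper's discussion glosses over, and you are right to worry about it: on $\mathbb{S}^{N-2}$ one cannot have $N$ points that are simultaneously pairwise equidistant and split into $N/2$ antipodal pairs, because $\langle v_i, v_j\rangle=c$ forces $\langle v_i,-v_j\rangle=-c$, so the non-antipodal angles split into $\theta$ and $\pi-\theta$. Using the equidistant simplex of the $N/2$ classes in $\mathbb{S}^{N/2-2}$ lifted by the antipodal map, and absorbing the $\cos^{-1}(-2/(N-2))$ versus $\cos^{-1}(-1/(N-1))$ discrepancy into the distortion, is a legitimate repair; the paper's own "proof" is essentially the preceding discussion paragraph, which never confronts this.

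However, the step in your second paragraph that produces the coefficient is wrong. You assert that the quotient $q:\mathbb{S}^{N-2}\to\mathbb{RP}^{N-2}$ "introduces a further factor of $\tfrac12$ because projective distance equals the smaller of the sphere distance and its supplement." Taking the smaller of $\theta$ and $\pi-\theta$ is not a halving operation: in the regime you correctly identify, the curvature-$1$ sphere angles cluster near $\cos^{-1}(-1/(N-1))\approx\pi/2$, so $\min(\theta,\pi-\theta)\approx\pi/2$ as well — there is no factor of $\tfrac12$. Tracing the scalings honestly, the rescaling to curvature $1$ contributes $\sqrt{\kappa}=\cos^{-1}(-1/(N-1))/(2\log\tfrac{\pi}{2})$ and the quotient contributes a multiplicative factor tending to $1$, so $d_N\cdot\sqrt{\kappa}\to\pi/2$, which matches the asymptotic curvature-$1$ projective distance between the images. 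The coefficient $4\log\tfrac{\pi}{2}$ in the corollary's denominator instead makes the scaled distance asymptote to $\pi/4$, and the resulting additive distortion is then bounded away from zero, contradicting the "vanishingly small" claim. In other words, you have reverse-engineered a step that reproduces the stated coefficient rather than derived it; the honest computation points to $2\log\tfrac{\pi}{2}$ in the denominator, and the discrepancy with the corollary's statement should be flagged rather than papered over with an unjustified $\tfrac12$.
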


\section{Control of Information Transfer Fidelity}
\label{s:control}

To overcome intrinsic limitations on quantum state transfer or speed
up transfer, one can either try to engineer spin chains or networks
with non-uniform couplings~\cite{Christandl2004,Christandl2005}, or
introduce dynamic control to change the network
topology~\cite{Zueco2009,Schirmer2009,Wang2010}.

Our analysis above shows that engineering the couplings is not
strictly necessary.  For an XX or Heisenberg-type chain with uniform
nearest-neighbor couplings, for example, it can easily be verified
that the information transfer fidelity between the end spins is unity,
and attainability of the bounds means that we can achieve arbitrarily
high state transfer fidelities between the end spins if we wait long
enough.  Engineering the couplings, however, can speed up certain
state transfer tasks such as state transfer between the end spins at
the expense of others.

A more flexible alternative to fixed engineered couplings is to apply
\emph{control} to change the network geometry and hence speed up
state transfer as well as enable some transfers that either were
forbidden or had poor ITF.  One way this can be achieved is to apply
static electromagnetic bias fields to change the energy-level
splittings between the spin-up and spin-down states for different
nodes in the graph, as suggested, e.g., in
  \cite{Severini2009}.  To see how the application of such bias
fields can alter the transfer fidelities and network geometry,
consider a simple, concrete example of a single bias field $\zeta$
applied to node $\ell$ in a spin ring with uniform coupling.  First,
due to translation invariance, we can always relabel the nodes so that
the biased node is node $N$.  Then, assuming XX coupling, the
Hamiltonian on the single excitation subspace becomes
\begin{equation}
\bar{H}^{(\zeta)}_N= \begin{pmatrix}
0 & 1 & \ldots  & 0 & 0 & 0 & \ldots & 0 & 1 \\
1 & 0 & \ldots  & 0 & 0 & 0 & \ldots & 0 & 0 \\
\vdots & \vdots & \ddots & \vdots & \vdots & \vdots   &    & \vdots   &\vdots     \\
0 & 0 & \ldots  & 0 &  1    & 0 & \ldots  & 0 & 0  \\
0 & 0 &\ldots   & 1 & 0 & 1 & \ldots  & 0 & 0  \\
0 & 0 & \ldots  & 0 & 1 & 0  & \ldots & 0 & 0\\
\vdots & \vdots &   & \vdots &   \vdots  & \vdots     & \ddots  & \vdots   & \vdots \\
0 & 0 & \ldots  & 0 &   0    & 0  & \ldots & 0        & 1          \\
1 & 0 & \ldots  & 0 & 0      & 0  & \ldots & 1        & \zeta
\end{pmatrix},
\label{e:perturbed_H_1}
\end{equation}
where it is observed that we have the decomposition
\begin{equation*}
 \H_N^{(\zeta)} = C_N + \zeta E_{N,N},
\end{equation*}
where $C_N$ is the $N \times N$ circulant matrix defined above and
$E_{N,N}$ is a $N\times N$ matrix which is zero except for a $1$ at
position $(N,N)$.

Physically, applying a large bias field to the $N$th node in the ring
results in a large detuning that effectively eliminates this node from
the ring and breaks the ring open, leaving a chain of length $N-1$.
Hence, in the limit $\zeta\to \infty$, we expect the transition
fidelities for the first $N-1$ nodes to approach those for a chain of
length $N-1$ while the transition fidelities between the first $N-1$
nodes and the final (biased) node approach $0$.  We now reformulate
this intuitively obvious result in precise mathematical language.

\begin{lemma}
\label{l:toeplitz}
The eigenvalues and eigenvectors of the $(N-1) \times (N-1)$ Toeplitz
matrix $T_{N-1}$ made up of ones on the super diagonal and subdiagonal
and zeros everywhere else are given by
$\lambda_k=2\cos\left(\tfrac{\pi k}{N}\right)$  and
$\ket{v_k}_i=\sqrt{\tfrac{2}{N}}\sin\left(\tfrac{\pi k i}{N}\right)$;
$k=1,\ldots,N-1$, $i=1,\ldots,N-1$.  Furthermore, for $k$ even,
$\ket{v_k}_1+\ket{v_k}_{N-1}=0$.
\end{lemma}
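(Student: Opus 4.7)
The plan is to verify the eigenvalue/eigenvector formulas by direct substitution into the eigenvalue equation, then extract the final parity identity by a one-line trigonometric calculation.

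First I would observe that $T_{N-1}$ acts by the discrete second-difference-like rule $(T_{N-1}x)_i = x_{i-1} + x_{i+1}$ for $1 \le i \le N-1$, subject to Dirichlet boundary conditions $x_0 = x_N = 0$. With the candidate $v_{k,i} = \sqrt{2/N}\,\sin(\pi k i/N)$, these boundary conditions are automatically satisfied since $\sin 0 = \sin(\pi k)=0$. I would then apply the sum-to-product identity
\begin{equation*}
\sin\!\tfrac{\pi k(i-1)}{N} + \sin\!\tfrac{\pi k(i+1)}{N} = 2\cos\!\tfrac{\pi k}{N}\,\sin\!\tfrac{\pi k i}{N},
\end{equation*}
which is exactly $(T_{N-1} v_k)_i = \lambda_k v_{k,i}$ with $\lambda_k = 2\cos(\pi k/N)$. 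This simultaneously settles both the eigenvalue and the eigenvector claims.

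Next I would note that for $k = 1, \ldots, N-1$ the values $2\cos(\pi k/N)$ are pairwise distinct because $\cos$ is strictly decreasing on $(0,\pi)$, so we obtain $N-1$ distinct eigenvalues, exhausting the spectrum of the $(N-1)\times(N-1)$ symmetric matrix $T_{N-1}$. Orthonormality of the $v_k$ then follows either automatically from symmetry and distinctness of eigenvalues, or from the standard Discrete Sine Transform identity $\frac{2}{N}\sum_{i=1}^{N-1}\sin(\pi k i/N)\sin(\pi k' i/N)=\delta_{kk'}$, which also justifies the $\sqrt{2/N}$ normalization.

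For the final assertion, a direct computation gives
\begin{equation*}
v_{k,1} + v_{k,N-1} = \sqrt{\tfrac{2}{N}}\!\left[\sin\!\tfrac{\pi k}{N} + \sin\!\left(\pi k - \tfrac{\pi k}{N}\right)\right] = \sqrt{\tfrac{2}{N}}\bigl(1-\cos(\pi k)\bigr)\sin\!\tfrac{\pi k}{N},
\end{equation*}
using $\sin(\pi k - x) = -\cos(\pi k)\sin x$. For $k$ even, $\cos(\pi k) = 1$ and the bracket vanishes, giving the claimed identity; for $k$ odd, the same formula yields $2\sqrt{2/N}\sin(\pi k/N) \ne 0$, showing that the restriction to even $k$ is sharp. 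There is no genuine obstacle here: the only subtlety is being careful about the Dirichlet boundary convention $v_{k,0}=v_{k,N}=0$, which must be stated explicitly so that the eigenvalue equation holds uniformly at the endpoints $i=1$ and $i=N-1$.
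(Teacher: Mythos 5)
Your proof is correct. The paper states Lemma \ref{l:toeplitz} as a standard fact about the discrete Laplacian on a path with Dirichlet boundary conditions and offers no proof, so there is nothing to compare against; your direct verification (boundary conditions $x_0=x_N=0$, the sum-to-product identity giving $(T_{N-1}v_k)_i=2\cos(\pi k/N)v_{k,i}$, and a count of $N-1$ pairwise distinct eigenvalues to exhaust the spectrum) is the canonical argument. Your added remark that the parity restriction is sharp --- that $v_{k,1}+v_{k,N-1}=2\sqrt{2/N}\sin(\pi k/N)\ne 0$ for $k$ odd --- is not stated in the lemma but is in fact used implicitly in the proof of Theorem \ref{t:tricky_limit}, where the paper asserts that for $k$ odd the component $\ket{v_k(\zeta)}_N$ decays as $c/\zeta$ with $c\ne 0$; making that explicit here is a small but genuine improvement.
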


\begin{theorem}
\label{t:tricky_limit}
Let $p_{\rm chain}^{N-1}$ be the maximum transfer fidelities for a
spin chain of length $N-1$ with uniform coupling between nearest
neighbors.  Let $p_{\rm ring}^{N,\zeta}$ be the maximum transfer
fidelities for a ring of size $N$ with bias $\zeta$ on the $N$th node.
Then
\begin{equation}
\lim_{\zeta \to \infty}p_{\rm ring}^{N,\zeta}(i,j)=
 \begin{cases} p_{\rm chain}^{N-1}(i,j), & \mbox{if } i,j <N; \\
               0, & i=N,j \ne N \mbox{or } i \ne N, j=N;\\
               1, & i,j=N.
  \end{cases}
\end{equation}
\end{theorem}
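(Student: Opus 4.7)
The plan is to identify the asymptotic eigenstructure of $\bar{H}_N^{(\zeta)}$ as $\zeta \to \infty$ via a block-matrix/Schur complement analysis, locate the limits of the rank-one spectral projectors, and then pass to the limit in the ITF sum~\eqref{e:ITF}. Write
\begin{equation*}
  \bar{H}_N^{(\zeta)} = \begin{pmatrix} T_{N-1} & b \\ b^T & \zeta \end{pmatrix},
  \qquad b = e_1 + e_{N-1} \in \mathbb{R}^{N-1},
\end{equation*}
where $T_{N-1}$ is the Toeplitz chain matrix of Lemma~\ref{l:toeplitz} and $b$ encodes the two couplings from node $N$ to its ring neighbours.

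For an eigenpair $(\lambda,(x,y))$ with $\lambda \neq \zeta$, the second block row gives $y = b^T x / (\lambda - \zeta)$; substituting into the first block row yields the reduced equation
\begin{equation*}
  \Bigl( T_{N-1} + \tfrac{1}{\lambda-\zeta}\, b b^T \Bigr) x = \lambda x.
\end{equation*}
As $\zeta \to \infty$ the rank-one perturbation vanishes uniformly on bounded $\lambda$, so by continuous dependence of the spectrum (the limit eigenvalues $\mu_k = 2\cos(\pi k/N)$, $k=1,\ldots,N-1$, of $T_{N-1}$ being pairwise distinct), $N-1$ eigenvalues $\lambda_k^{(\zeta)}$ converge to $\mu_k$ with eigenvectors $(x_k^{(\zeta)},y_k^{(\zeta)}) \to (\ket{w_k},0)$, and the quantitative estimate $y_k^{(\zeta)} = O(1/\zeta)$ is immediate from the formula for $y$. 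For the even values of $k$ in Lemma~\ref{l:toeplitz} one has $b^T\ket{w_k} = \ket{w_k}_1 + \ket{w_k}_{N-1} = 0$, so $(\ket{w_k},0)$ is actually an \emph{exact} eigenvector of $\bar{H}_N^{(\zeta)}$ at eigenvalue $\mu_k$ for every $\zeta$. The remaining eigenvalue $\lambda_N^{(\zeta)}$ diverges (from $\sum_k \lambda_k^{(\zeta)} = \Tr \bar{H}_N^{(\zeta)} = \zeta$), and its normalised eigenvector concentrates on $\ket{N}$.

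Inserting these limits into the rank-one projectors $\Pi_k^{(\zeta)} = \ket{\tilde v_k^{(\zeta)}}\bra{\tilde v_k^{(\zeta)}}$, the three cases follow by termwise inspection of~\eqref{e:ITF}. For $i,j<N$ we have $\bra{i}\Pi_k^{(\zeta)}\ket{j} \to \inner{i}{w_k}\inner{w_k}{j}$ for $k < N$ and $\to 0$ for $k=N$, yielding $p_{\max}^{\mathrm{ring},\zeta}(i,j) \to p_{\max}^{\mathrm{chain}}(i,j)$. For $i=N$, $j<N$ (and symmetrically $i<N$, $j=N$) every term vanishes: $\bra{N}\Pi_k^{(\zeta)}\ket{j}$ involves the factor $y_k^{(\zeta)} \to 0$ for $k<N$, while $\ket{\tilde v_N^{(\zeta)}}$ has no weight on $\ket{j}$ in the limit. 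For $i=j=N$ only $\Pi_N^{(\zeta)}$ survives, with $|\bra{N}\Pi_N^{(\zeta)}\ket{N}| \to 1$, so $p_{\max}^{\mathrm{ring},\zeta}(N,N)\to 1$. The main technical hurdle is justifying the termwise passage to the limit in the ITF sum, which requires that the low eigenvalues $\{\lambda_k^{(\zeta)}\}_{k<N}$ be eventually simple so the associated projectors depend continuously (in fact analytically) on $1/\zeta$; this is exactly what the pairwise distinctness of $\{\mu_k\}$ delivers, via standard analytic perturbation theory.
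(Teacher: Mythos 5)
Your proposal is correct and follows the same broad strategy as the paper's proof (identify the limiting eigenstructure of the $N\times N$ block matrix, then pass to the limit in the ITF sum), but the technical route to the eigenvalue convergence differs. The paper expands $\det(\lambda I_N - \H_N^{(\zeta)})$ as $\det(\lambda I_N - C_N) - \zeta\det(\lambda I_{N-1}-T_{N-1})$ using the multilinearity of the determinant in its columns, then invokes a root-locus argument: for large $\zeta$ the polynomial is dominated by the second term, so $N-1$ roots converge to the zeros of $\det(\lambda I_{N-1}-T_{N-1})$ and exactly one escapes to infinity; the eigenvector behavior is then read off from the block eigenequation, with the boundedness of $\zeta\ket{v_k}_N$ deduced directly from the bottom row. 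You instead form the Schur complement and argue the rank-one term $\tfrac{1}{\lambda-\zeta}bb^T$ vanishes on bounded $\lambda$, then conclude by ``continuous dependence of the spectrum.'' That phrasing glosses over the fact that the reduced equation is a nonlinear eigenvalue problem (the perturbation coefficient depends on the unknown $\lambda$), so ordinary spectral continuity does not literally apply; the fix is routine — e.g.\ clear the denominator and apply Rouch\'e/Hurwitz, or fall back to the paper's characteristic-polynomial factorization — but it is the one spot that would need tightening. Your trace argument $\sum_k\lambda_k(\zeta)=\zeta$ cleanly forces the remaining eigenvalue to diverge once the other $N-1$ are known to stay bounded, which is a nice alternative to the paper's root-locus phrasing. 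You also add an observation the paper does not make explicit: for $k$ even, $b^T\ket{w_k}=0$ (Lemma~\ref{l:toeplitz}) implies $(\ket{w_k},0)$ is an \emph{exact} eigenvector of $\H_N^{(\zeta)}$ for every $\zeta$, sharpening the paper's statement that the $N$th component merely decays faster than $1/\zeta$. The passage to the limit in the ITF sum and the three cases $i,j<N$; one of $i,j$ equal to $N$; and $i=j=N$ are handled the same way as in the paper.
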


\begin{proof}
Write the characteristic polynomial of $\bar{H}^{\zeta}_N$ as
$\det((\lambda I_N -C_N)-\zeta E_{N,N})$ and recall that the
determinant of the sum of two matrices equals the sums of the
determinants of all matrices made up with some columns of one matrix
and the complementary columns of the other matrix.  Applying the
latter to the characteristic polynomial of $\bar{H}^{(\zeta)}_N$
yields
\begin{align*}
  \det(\lambda I_N-\H^{(\zeta)})
 =\det(\lambda I_N - C_N) -\zeta \det\left(\lambda I_{N-1}-T_{N-1}\right),
\end{align*}
where $T_{N-1}$ is the Toeplitz matrix defined in the lemma.  From
classical root-locus techniques, it follows that, as $\zeta \to
\infty$, \emph{exactly one} eigenvalue $\lambda_N(\zeta)$ goes to
$\infty$, while the remaining ones $\lambda_1(\zeta),...,
\lambda_{N-1}(\zeta)$ converge to the roots of $\det\left(\lambda
I_{N-1}-T_{N-1}\right)=0$.

Next, we look at the eigenvectors and rewrite the eigenvector equation as
\begin{equation*}
\left(\begin{array}{c|c}
T_{N-1} &
\begin{array}{c}
1\\
0_{N-3} \\
1
\end{array}
\\ \hline
\begin{array}{ccc}
1 & 0_{N-3} & 1
\end{array}
& \zeta
\end{array}\right)
\left(\begin{array}{c}
\ket{v_k(\zeta)}_1\\
\vdots\\
\ket{v_k(\zeta)}_{N-1} \\ \hline
\ket{v_k(\zeta)}_N
\end{array}\right)=
\lambda_k(\zeta)
\left(\begin{array}{c}
\ket{v_k(\zeta)}_1\\
\vdots\\
\ket{v_k(\zeta)}_{N-1}\\ \hline
\ket{v_k(\zeta)}_N
\end{array}\right).
\end{equation*}
Consider first the first $k \ne N$ equations. Since $\lim_{\zeta \to
  \infty}\lambda_k(\zeta)$ exists and is finite, it follows from the
bottom eigenequation that $\zeta \ket{v_k(\zeta)}_N$ remains bounded
as $\zeta \to \infty$. Therefore, $\lim_{\zeta \to \infty}
\ket{v_k}_N=0$.  Since $\lambda_k(\infty)$ is a \emph{unique}
eigenvalue of $T_{N-1}$, it follows that $\lim_{\zeta \to
  \infty}\ket{v_k(\zeta)}_{1:N-1}$ is the corresponding eigenvector of
$T_{N-1}$.  It remains to show that with this $\ket{v_k}_{1:N-1}$ the
bottom eigenequation can be made to hold.  This is easily achieved by
defining
\begin{equation*}
  \lim_{\zeta \to \infty} \zeta \ket{v_k(\zeta)}_N
   = -\lim_{\zeta \to \infty} (\ket{v_k(\zeta)}_1+\ket{v_k(\zeta)}_{N-1})
\end{equation*}
By the lemma, for $k$ even, we have $\lim_{\zeta \to \infty} \zeta
\ket{v_k(\zeta)}_N =0$, and therefore the $k<N$ eigenequation holds with
$\ket{v_k(\zeta)}_N$ going to zero faster than $1/\zeta$.  For $k$ odd,
$\ket{v_k(\zeta)}_N$ goes to zero as $c/\zeta$, where $c \ne 0$ is some
constant.

By the root locus result, for $\zeta$ large enough, all eigenvalues
are distinct, and we have
\begin{align}
\label{e:almost_there}
\sqrt{p_{\mathrm{ring}}^{(\zeta,N)}(i,j)}
&= \sum_{k<N} |\ip{i}{v_k(\zeta)} \ip{v_k(\zeta)}{j}|
             + |\ip{i}{v_N(\zeta)}\ip{v_N(\zeta)}{j}| \nonumber \\
&= \sqrt{p_{\mathrm{chain}}^{(N-1)}(i,j)} + |\ip{i}{v_N(\zeta)}\ip{v_N(\zeta)}{j}|
\end{align}
where the second equality is understood as the $\zeta \to \infty$
limit.  To complete the proof, it therefore remains to look at
$\ket{v_N(\zeta)}$.

The last $k=N$ eigenequation easily implies that $\zeta
\ket{v_N(\zeta)}_{1:N-1}$ remains bounded as $\zeta \to
\infty$. Therefore $\lim_{\zeta \to \infty}
\ket{v_N(\zeta)}_{1:N-1}=0$.  To normalize the eigenvector, we take
$\lim_{\zeta \to \infty} \ket{v_N(\zeta)}_N=1$.  The latter together
with~\eqref{e:almost_there} proves the theorem.
\end{proof}

Thus we have a systematic way to compute the asymptotic transfer
probability of a ring with high bias from the transfer probability of
a chain without bias.

\begin{myexample}[Dynamic Routing.]
As an illustration of how these results can be used, consider a ring
of size $N=9$.  The maximum transfer fidelities between nodes $i\neq
j$ for this ring without bias are quite low, $0.4094$ and $0.4444$.
However, applying a large bias to node $9$ changes the maximum
transfer fidelities.  In particular, the maximum transfer fidelity
between nodes $1$ and $8$, $2$ and $7$, $3$ and $6$, and $4$ and $5$,
now approaches $1$.  Fig.~\ref{f:checker89} shows a visual
representation of the transfer fidelities for the ring without bias
(left) and with bias (right).  This result is consistent with
Theorem~\ref{t:tricky_limit}, as using Lemma~\ref{l:toeplitz}, it is
easily verified that
\begin{equation}
   \sqrt{p_{\mathrm{ring}}^{(8)}(i,9-i)}
= \frac{2}{9}\sum_{k=1}^8 \left|\sin\left(\frac{\pi i k}{9}   \right)
                                                      \sin\left(\frac{\pi(9-i)k}{9}\right)\right| =1.
\end{equation}
The example also shows that a finite bias is sufficient to enable
almost perfect state transfer in practice, despite the fact that the
ring only becomes a chain in the limit when an infinite bias is
applied to node $9$.  We also used the LLL-inspired algorithm to
estimate the transfer time as a function of the infidelity of the
transfer.  We note here that it was crucial to use the weighted LLL
algorithm to generate a range of simultaneous Diophantine
approximations, which generally did \emph{not} satisfy the parity
constraints on the numerators, and to use the idea of combining
approximations to satisfy the constraints.  With this approach we were
able to find solutions satisfying all of the parity constraints on the
numerators over a wide range of infidelities to estimate the transfer
times required as a function of the tolerated infidelity.  The
results, shown in Fig.~\ref{f:ring9-bias10-scaling} (left) suggest
that high fidelities are indeed attainable for modest biases, and the
apparent linearity of the data in the bilogarithmic plot still
suggests a polynomial scaling.  However, the actual transfer times are
significantly higher in this case than in previous examples.  We point
out here that our algorithm is not guaranteed to find the shortest
possible time although Fig.~\ref{f:ring9-bias10-scaling} (right)
suggests that there is a good correlation between the Diophantine
approximation error and the observed infidelity of the transfer.
Furthermore, the algorithm enables us to estimate necessary transfer
times far beyond the regime accessible by brute-force numerical
simulations.
\end{myexample}

This example shows how a dynamic routing scheme can be implemented to
transfer information from any node in a ring to any other node with
fidelity approaching unity by simply applying bias fields to different
nodes.  For transfer between nodes $1$ and $8$, $2$ and $7$, $3$ and
$6$, or $4$ and $5$, it suffices to apply a large bias to node $9$.
If we wish to transfer information from node $1$ to $4$ then
translation invariance of the ring allows us to shift the labels by
$2$, so that node $1$ becomes $3$ and $4$ becomes $6$, and applying a
bias to the new node $9$ will enable the transfer.

Further reflection shows that we can achieve maximum transfer
fidelities approaching unity for transfer between any pair of nodes in
a ring of size $N$ provided $N$ is odd by simply biasing the node in
the middle between the pair of spins. This is because in this case
$N-2$ is odd, so there must be an odd number of spins along one path
around the ring and an even number between the spins around the
other. By applying the bias in the middle of the path with an odd
number of spins we asymptotically reach a chain with $N-1$ (even)
spins. In this chain the transfer probability between spins mirrored
at the center is $1$, which is specifically true for the source and
target spin with an even number of spins between them in the chain.

If $N$ is even instead, then the situation is more complicated. If
there is an odd number of spins between source and target along the
ring, then applying a bias at the middle creates an odd chain where
source and target are connected with probability $1$ as they are at
mirror-symmetric positions in the ring. If there is an even number of
spins between source and target, then applying a single bias cannot
achieve perfect information transfer as the spins can never be at
mirrored positions in the odd chain (which are the only ones in the
chain perfectly connected). There are, however, multiple solutions to
apply a bias at two spins that can asymptotically generate a suitable
chain.

In practice it may be possible and even preferable to simultaneously
apply biases to several nodes instead of a single node to shape the
overall potential landscape.  This case is more difficult to treat
analytically but preliminary results~\cite{CDC2015} suggest that
numerical optimization can be used in this case to optimize the
applied biases to achieve significant reductions in the transfer times
and the magnitude of the required bias fields, as well as to deal with
practical issues such as leakage of the bias fields, i.e., the tendency
of a bias applied to one node to also affect nearby nodes.

\begin{figure}[t]
\includegraphics[width=0.49\textwidth]{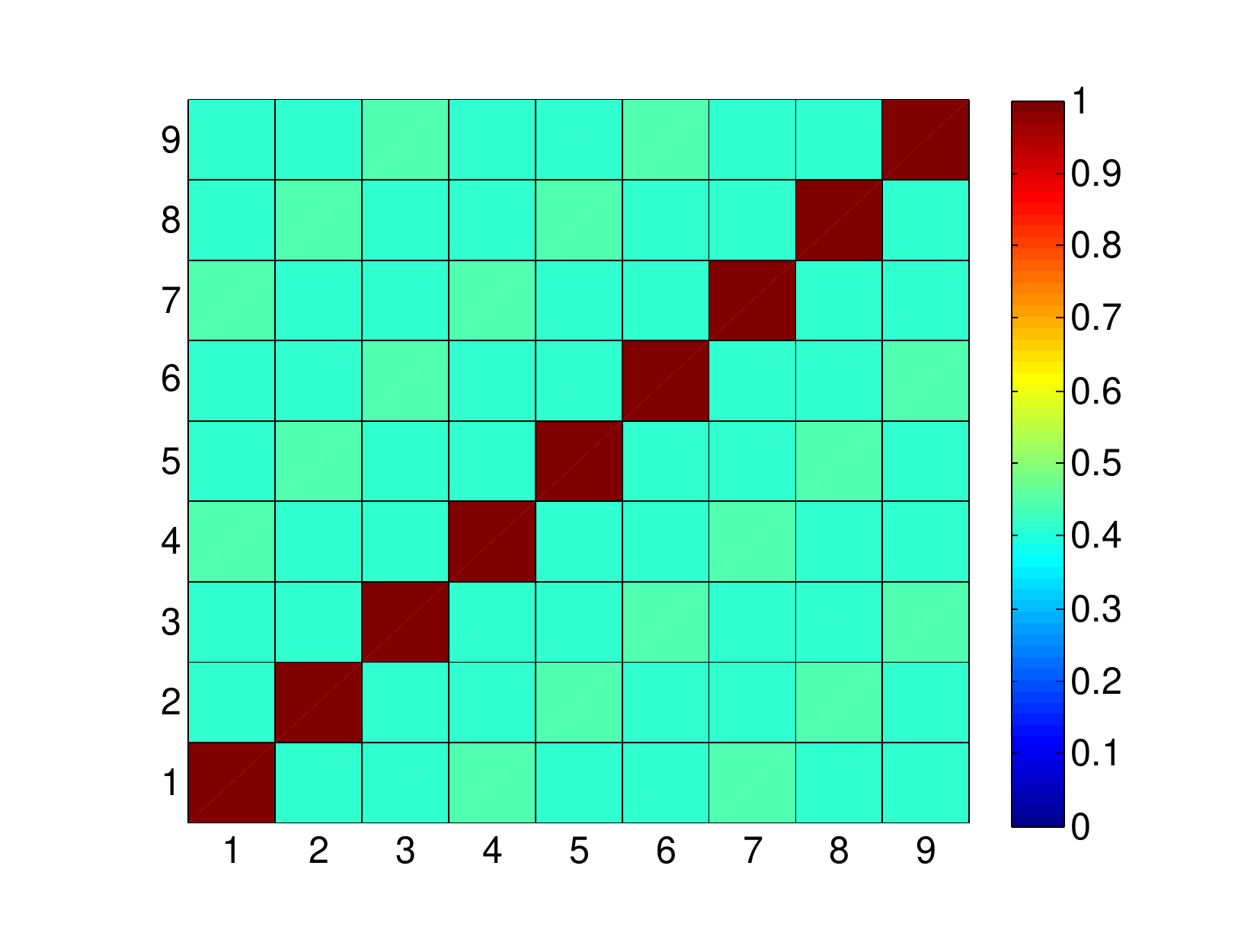}\hfill
\includegraphics[width=0.49\textwidth]{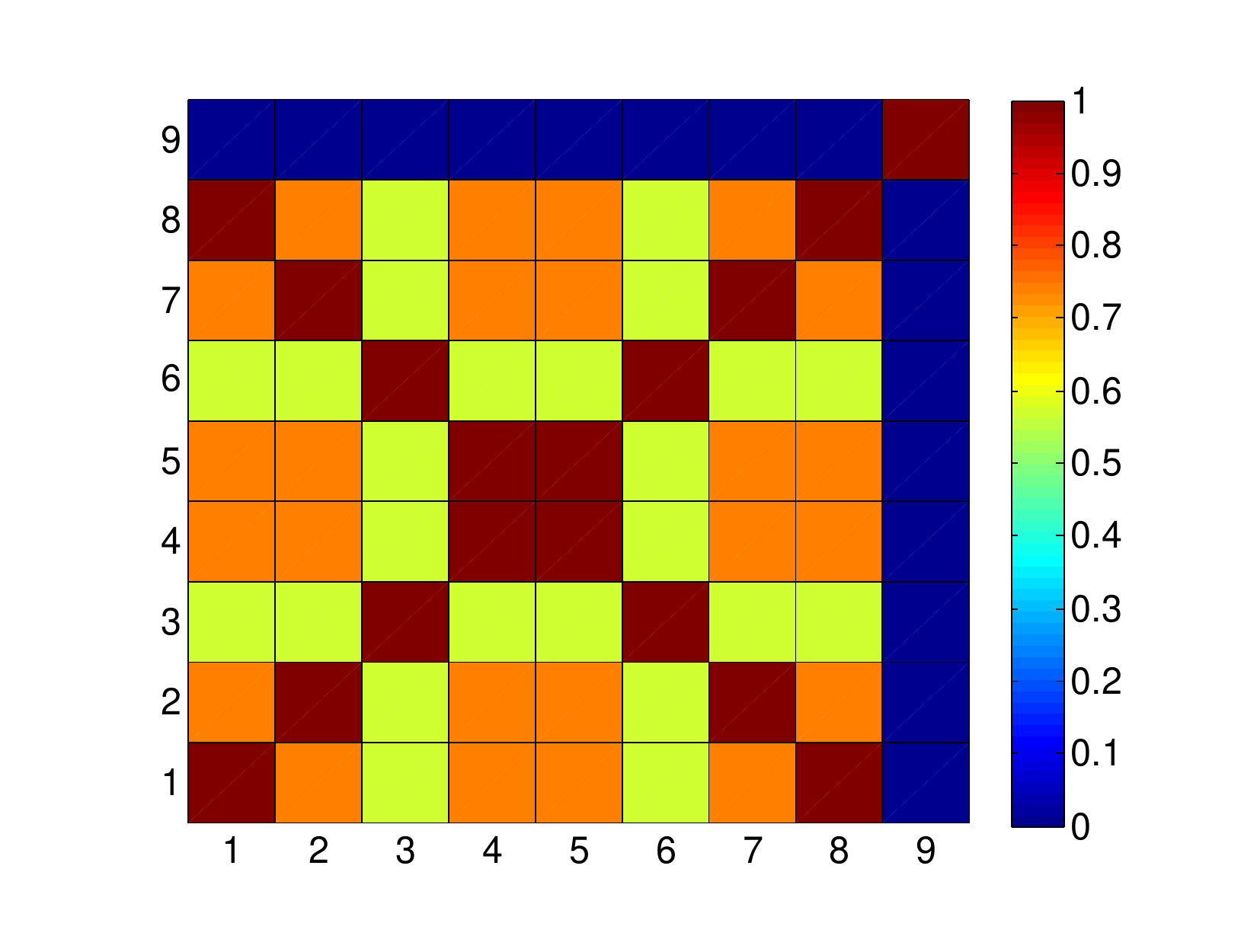}\\
\caption{Visual representation of maximum transfer fidelities for a
  ring of size 9 without bias (left) and the same ring with a finite
  bias applied to node 9 (right).}
\label{f:checker89}
\end{figure}

\begin{figure}[t]
  \includegraphics[width=0.49\textwidth]{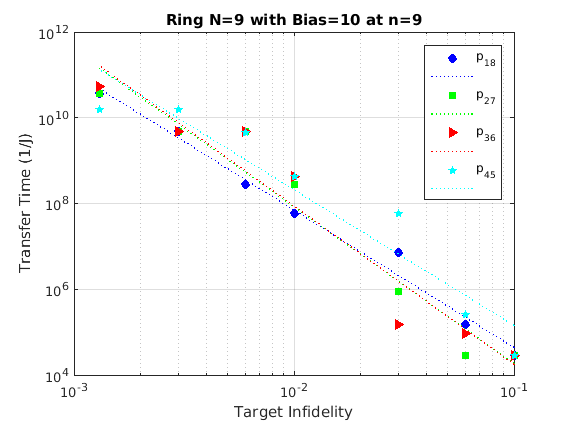}
  \includegraphics[width=0.49\textwidth]{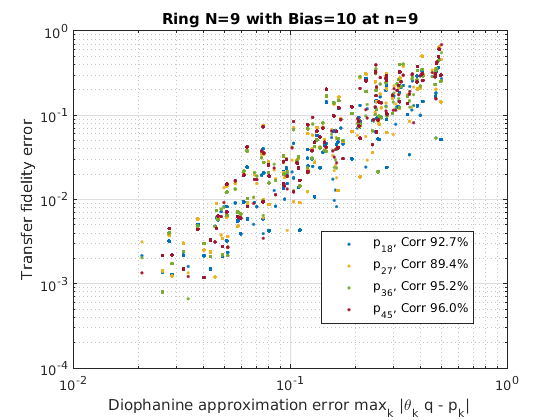}
  \caption{Scaling of transfer times for transfers between nodes $(1,8)$,
    $(2,7)$, $(3,6)$ and $(4,5)$ for a ring of size $N=9$ with a bias of strength
    $10$ (in units of $1/J$) applied to node $n=9$ (left) and correlation of
    fidelity error (or infidelity) and Diophantine approximation error (right).}
  \label{f:ring9-bias10-scaling}
\end{figure}

\section{Conclusion}

The concept of maximum transfer fidelity for information transfer
between nodes in a network of interacting spins was introduced and
criteria for attainability of the bounds in terms of the transition
frequencies of the network were given.  Attainability was shown to be
related, theoretically, to minimality of a linear flow and,
computationally, to a translation on a torus.  This last connection
enabled us to derive upper bounds on the time required to realize
transfer fidelities within $\epsilon_{\mathrm{prob}}$ of the maximum
transfer fidelity, for arbitrary $\epsilon_{\mathrm{prob}}>0$, via the
simultaneous Diophantine approximation.  Algorithms were discussed to
find the required approximations.

The ultimate aim of this analysis is to understand the intrinsic
limitations of information transfer in spin networks and utilize this
understanding to engineer networks with favorable bounds on the
information transfer fidelities and dynamic attainability properties,
so that high spin transfer fidelities can be attained in short times,
enabling fast transfer and minimizing the effects of noise and
decoherence.  An advantage of our approach of combining general ITF
bounds and asymptotic attainability conditions with an algorithm to
estimate the time required to achieve transfer within a set margin of
error, compared to engineering the spectrum of the network Hamiltonian
to admit perfect state transfer, for example, is that the latter
condition is generally too strong a requirement, as in practice there
are always margins of error.  Therefore it makes more sense to ask how
much time is required to achieve a certain transfer fidelity for a
given acceptable margin of error $\epsilon$, and try to optimize the
network topology, couplings or biases to achieve the best possible
transfer times for the acceptable margins of error.

The general results were applied specifically to regular spin
structures such as rings with uniform coupling.  In this case, the
information transfer infidelity prametric induced by maximum transfer
fidelity takes on full significance as it can be shown to be a proper
metric defining an information transfer infidelity geometry for the
network, which is significantly different from the physical network
geometry.  The analysis shows that the intrinsic transfer fidelities
for simple networks such as rings are often attainable asymptotically
but the times required to achieve high fidelities can be very long.
The intrinsic bounds on the ITFs and transfer times can be favorably
changed, however, by simple Hamiltonian engineering such as applying
spatially distributed static bias fields.  In particular, it was shown
how such simple controls can be used to alter the information transfer
fidelities and geometry of a network.  It was demonstrated how this
idea can be applied to enable or disable information transfer between
a pair of nodes in the network.  Simple bias controls are sufficient
to direct information flow between nodes.  By changing the biases
different transfers can be targeted, and thus a spin ring with fixed
couplings can be turned into a simple quantum router for information
encoded in excitations of a spin network.

Directions for future work include optimizing information transfer in
spin networks via optimal control to achieve faster and more efficient
dynamic routing in more complex spin networks.  While this work
focused on transfer of a single excitation, the concepts and analysis
can also be applied to the case of encoding and simultaneous transfer
of multiple excitations.  This is interesting as it could increase the
information transmission capacity of the network.  Finally, although
simulation results for similar spin systems suggest that some degree
of intrinsic robustness of state transfer and the ability to mitigate
the effects of noise, decoherence or fluctuations in the couplings via
control~\cite{Burgarth2007,Schirmer2009,Wang2010},
the sensitivity of transfer fidelities with regard to noise and
deleterious effect of the environment need to be investigated for
specific physical realizations of spin networks.

\section*{Acknowledgment}
E. A. Jonckheere was partially supported by the Army Research Office
(ARO) Multi University Research Initiative (MURI) grant
W911NF-11-1-0268.  S. G. Schirmer and F. C. Langbein acknowledge
support from the Ser Cymru National Research Network on Advanced
Engineering.  SGS also acknowledges funding from a Royal Society
Leverhulme Trust Senior Fellowship.

\end{document}